\title{%
\allowbreak Sparse Polynomial Interpolation Codes and their decoding beyond
half the minimal distance\raisebox{0ex}{*}%
}
\author{
Erich L. Kaltofen\\
Dept.\ of Mathematics, NCSU\\
Raleigh, NC 27695, USA\\
  \url{kaltofen@math.ncsu.edu}\\
  \url{www4.ncsu.edu/~kaltofen},\\
Cl\'{e}ment Pernet\\
  {U. J. Fourier, LIP-AriC, CNRS, Inria, UCBL, \'ENS de Lyon}\\
  {46 All\'ee d'Italie, 69364 Lyon Cedex 7, France}\\
   \url{clement.pernet@imag.fr}\\
   \url{http://membres-liglab.imag.fr/pernet/}
}
\newtheorem{theorem}{Theorem}
\newtheorem{definition}{Definition}
{Proposition}
\newtheorem{cor}
{Corollary}
\newtheorem{lemma}
{Lemma}
\newtheorem{remark}{Remark}
\newtheorem{example}{Example}
\newtheorem{problem}{Problem}
\newcommand{\graphl}{.8\textwidth}
\newcommand{\BM}{Ber\-le\-kamp/\allowbreak Mas\-sey\xspace}
\newcommand{\BMA}{\BM algorithm\xspace}
\newcommand{\ZZ}{{\mathbb Z}}
\newcommand{\RR}{{\mathbb R}}
\newcommand{\KK}{\mathsf{K}}
\newcommand{\QQ}{{\mathbb Q}}
\newcommand{\seq}[1]{(#1)}
\newcommand{\gendeg}{t}
\newcommand{\sparpoly}{f} 
\newcommand{\genf}{\Lambda} 
\newcommand{\spbound}{T}
\newcommand{\GO}[1]{\ensuremath{{O}(#1)\xspace}}
\newcommand{\Z}{\ensuremath{\mathbb{Z}}}
\newcommand{\comment}[1]{}
\newcommand{\EKhref}[2]{URL: \url{#1}}
\def\citep{\cite}
\begin{document}

\maketitle 

\def\thefootnote{\fnsymbol{footnote}}
\footnotetext[1]{
\scriptsize
This material is based on work supported in part
by the National Science Foundation under Grant 
CCF-1115772 (Kaltofen),
and the Agence Nationale de la Recherche 
under Grant HPAC ANR-11-BS02-013 and the Inria Associate Teams 
Grant QOLAPS (Pernet).
}

\begin{abstract}\noindent 
We present algorithms performing sparse univariate polynomial interpolation with
errors in the evaluations of the polynomial. Based on the initial work by Comer,
Kaltofen and Pernet [Proc. ISSAC 2012], we define the sparse polynomial
interpolation codes and state that their minimal distance is precisely the code-word length divided by twice the sparsity. At ISSAC 2012, we have given a decoding algorithm for as much as half the minimal distance and a list decoding algorithm up to the minimal distance. 

Our new polynomial-time list decoding algorithm uses sub-sequences of the
received evaluations indexed by an arithmetic progression, allowing the decoding for a larger radius, that is, more errors in the evaluations while returning a list of candidate sparse polynomials. We quantify this improvement for all typically small values of number of terms and number of errors, and provide a worst case asymptotic analysis of this improvement. For instance, for sparsity $T = 5$ with $\le 10$ errors we can list decode in polynomial-time from $74$ values of the polynomial with unknown terms, whereas our earlier algorithm required $2T(E+1) = 110$ evaluations. 

We then propose two variations of these codes in characteristic zero, where
 appropriate choices of values for the variable yield a much larger minimal
 distance: the code-word length minus twice the sparsity.

\end{abstract} 


\vspace{1mm} \noindent {\bfseries Categories and Subject Descriptors:}

\noindent
I.1.2 {[Symbolic and Algebraic Manipulation]}: Algorithms%
; G.1.1 {[Numerical Analysis]}: Interpolation--smoothing%
; 
\\E.4 {[Coding and Information Theory]}: Error control codes.%

\vspace{1mm} \noindent {\bfseries General Terms:}
  Algorithms%
, Reliability

\vspace{1mm} \noindent {\bfseries Keywords:}
sparse polynomial interpolation,
Blahut's algorithm,
Prony's algorithm,
exact polynomial fitting with errors.

\newcommand{\myparagraph}[1]{{\bfseries\itshape #1.}}

\section{Introduction}\label{sec:intro}
Evaluation-interpolation schemes are 
a key ingredient in many of today's 
computations. 
Model fitting for empirical data sets is a well-known one, where additional
information on the model helps improving the fit. In particular, models of
natural phenomena often happen to be sparse, which has motivated a wide range of
research including compressive sensing~\cite{CT04}, and sparse
interpolation of polynomials~\cite{deProny,B-OTi88,KLW90,GriKar93,Garg20092659,GR10}.
Most algorithms for the latter problem rely on the connection between 
linear complexity and sparsity, often  referred to as Blahut's Theorem
(Theorem~\ref{th:Blahut} \cite{Blahut83,MasseySchaub88}) 
though
already used in the 18th
century by Prony~\cite{deProny}. The \BMA~\cite{Mass69} makes this connection
effective. These exact sparse interpolation techniques have been very
successfully applied to numeric computations~\cite{GLL09,KL03,CKP12,KaYa13}.
 
Computer algebra also widely uses evaluation\--inter\-pola\-tion schemes as a
key computational tool: reducing operations on polynomials to base ring 
operations, integer and rationals operations to finite fields
operations, multivariate polynomials operations to univariate polynomials
operations, etc. With the rise of large scale parallel computers, their ability to convert
a large sequential computation, into numerous smaller independent tasks is
of high importance.


Evaluation-interpolation schemes are also at the core of the famous Reed-Solomon
error correcting codes~\cite{ReSo60,Moon05}. There, a block of information, viewed as a
dense polynomial over a finite field is encoded by its evaluation in $n$
points. Decoding is achieved by an interpolation resilient to errors.
%
%
%
Blahut's theorem~\cite{Blahut83,MasseySchaub88} originates from the decoding of Reed-Solomon codes: the
interpolation of the error vector of sparsity $t$ is a sequence of linear
complexity $t$ whose  generator, computed by \BMA, carries  in its roots the
information of the error locations.
Beyond the field of digital communication and data storage, error correcting
codes have found more recent applications in fault tolerant distributed
computations~\cite{KH84,KPRRS10,DBBHD12}. In particular, 
parallelization based
on evaluation-interpolation 
can be made fault tolerant if
interpolation with errors 
is performed. This is achieved by Reed-Solomon
codes for dense polynomial interpolation and by CRT codes, for residue number
systems~\cite{KPRRS10}. 
The problem of sparse polynomial interpolation with errors rises
naturally in this context.
We give algorithms for the solution of the problem in~\cite{CKP12}.
Our approach is 
naturally related to the $k$-error linear complexity problem~\cite{MeNi02}
from stream cipher theory. 
A major concern in our previous results is that in
order to correct $E$ errors, the number of evaluations has to be increased by a
multiplicative factor linear in $E$. In comparison, dense interpolation with
errors only requires an additive term linear in $E$. 

In this paper we further investigate this problem from a coding theory
viewpoint. 
In section~\ref{sec:sic} we define the sparse polynomial
interpolation codes. We then focus on the case where the evaluation points are
consecutive powers of a primitive root of unity, whose 
order is divisible by twice
the sparsity, in order to 
to benefit from Blahut/Ben-Or/Tiwari
interpolation algorithm. We
show that in this stetting 
the minimal distance is precisely the
length divided by twice the sparsity. The algorithms of~\cite{CKP12} can be viewed
as a unique decoding algorithm for as much as half the minimal distance and a
list decoding algorithm up to the minimal distance.
In section~\ref{sec:affineseq}, we propose a new polynomial-time list decoding
algorithm that uses sub-sequences of the received evaluations indexed by an
arithmetic progression, reaching a larger decoding radius.
We quantify this improvement on average by experiments, in the worst case for
all typically small values of number of terms and number of errors, and make
connections between the asymptotic decoding capacity and 
the famous 
Erd\H{o}s-Tur\'an 
problem of additive combinatorics. 
%
We then propose in section~\ref{sec:charzero} two variations of these codes in
characteristic zero, where appropriate choices of values for the variable yield
a much larger minimal distance: the length minus twice the sparsity. 

\myparagraph{Linear recurring sequences} 
We recall that a sequence $(a_0,a_1,\ldots)$ is linearly recurring if there exists
$\lambda_0, \lambda_1,\allowbreak \dots, \lambda_{t-1}$ such that $a_{j+t} =
\sum_{i=0}^{t-1}\lambda_ia_{j+i} \ \forall j\geq 0$. 
The monic polynomial $\Lambda(z) = z^t-\sum_{i=0}^{t-1}\lambda_i z^i$ is called a
generating polynomial of the sequence, the  generating polynomial with least
degree is called the minimal generating polynomial and its degree is the linear
complexity of the sequence.

These definitions can be extended to vectors, viewed as contiguous
sub-sequences of an infinite sequence. The minimal generating polynomial of an
$n$-dimensional vector is the monic polynomial $\Lambda(z) =
z^t-\sum_{i=0}^{t-1}\lambda_i z^i$ of least degree such that 
$a_{j+t} = \sum_{i=0}^{t-1}\lambda_ia_{i+j}\  \forall 0\leq j\leq n-t-1$.
Note that consequently, any vector is linearly recurring with linear
complexity less than $n$.
\begin{theorem}  [Blahut~\cite{Blahut83, MasseySchaub88}]\label{th:Blahut}
Let $\KK$ be a field containing an $N$-th primitive root of unity.
The linear complexity of an $N$-periodic sequence $A=(a_0, \dots, \allowbreak
a_{N-1},\allowbreak a_0,\ldots)$ over $\KK$ is equal
to the Hamming weight of the discrete Fourier transform of
$(a_0,\ldots,\allowbreak a_{N-1})$.
\end{theorem}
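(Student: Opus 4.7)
The plan is to prove equality by two matching inequalities, using the inverse DFT as the bridge between the exponential representation of $A$ and its minimal recurrence. Fix a primitive $N$-th root of unity $\omega\in\KK$, write $\hat{a}_k = \sum_{j=0}^{N-1} a_j \omega^{-jk}$ for the DFT coefficients, and put $S=\{k : \hat{a}_k\ne 0\}$, so the Hamming weight in question equals $t:=|S|$. Because $\KK$ contains $\omega$, the polynomial $z^N-1$ has $N$ distinct roots in $\KK$, which forces $N$ to be invertible in $\KK$; hence the inversion formula $a_j = \tfrac{1}{N}\sum_{k\in S}\hat{a}_k\omega^{jk}$ is valid.

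For the direction $\text{linear complexity}\le t$, I will take the monic polynomial $\Lambda(z)=\prod_{k\in S}(z-\omega^k)$, of degree $t$, and verify directly that it annihilates the sequence. Substituting the inverse DFT into $\sum_{i=0}^{t}\lambda_i a_{j+i}$ and switching the order of summation produces the expression $\tfrac{1}{N}\sum_{k\in S}\hat{a}_k\omega^{jk}\Lambda(\omega^k)$, which vanishes because each $\omega^k$ with $k\in S$ is a root of $\Lambda$ by construction. Hence $\Lambda$ is a generating polynomial of degree $t$, bounding the linear complexity.

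For the reverse inequality, let $\Lambda$ now denote the minimal generating polynomial, of degree $s$. The $N$-periodicity of $A$ says that $z^N-1$ itself annihilates the sequence, and since the set of annihilating polynomials is an ideal of $\KK[z]$ whose monic generator is $\Lambda$, I get $\Lambda\mid z^N-1$. As $z^N-1$ is separable over $\KK$ and splits as $\prod_{k=0}^{N-1}(z-\omega^k)$, it follows that $\Lambda(z)=\prod_{k\in S'}(z-\omega^k)$ for some $S'\subseteq\{0,\dots,N-1\}$ with $|S'|=s$. The standard theory of linear recurrences with simple, known characteristic roots then yields a representation $a_j=\sum_{k\in S'}c_k\omega^{jk}$, where every $c_k$ is nonzero (otherwise $\Lambda/(z-\omega^k)$ would still annihilate, contradicting minimality). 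Applying the DFT to this representation and invoking orthogonality $\sum_{j=0}^{N-1}\omega^{j(k-m)}=N\cdot[k=m]$ gives $\hat{a}_m=Nc_m$ for $m\in S'$ and $0$ otherwise, so $S=S'$ and $t=s$, completing the proof. The main technical point is the clean identification of the DFT support with the root set of the minimal recurrence; once the divisibility $\Lambda\mid z^N-1$ is in hand (which is where periodicity enters decisively), everything else reduces to the orthogonality of the characters $j\mapsto\omega^{jk}$.
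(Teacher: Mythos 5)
Your proof is correct. Note that the paper does not actually prove Theorem~\ref{th:Blahut}; it states it as a known result cited from Blahut and Massey--Schaub, so there is no in-paper argument to compare against. Your route is the standard one and all the delicate points are handled properly: the invertibility of $N$ in $\KK$ (forced by the existence of a primitive $N$-th root of unity), the fact that the annihilator of the sequence is an ideal containing $z^N-1$ so that the minimal polynomial divides the separable polynomial $\prod_{k=0}^{N-1}(z-\omega^k)$ and hence has simple roots among the $\omega^k$, the nonvanishing of the coefficients $c_k$ by minimality, and the identification $\hat{a}_m = Nc_m$ via orthogonality. The two inequalities fit together cleanly, and the argument is complete.
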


\myparagraph{The Blahut/Ben-Or/Tiwari Algorithm}
We  review the Blahut/Ben-Or/Tiwari~\cite{Blahut84,B-OTi88} algorithm in the setting of univariate sparse
polynomial interpolation. 
Let $f$ be a univariate polynomial with $\gendeg$ terms, $m_j$ and let $c_j$ the corresponding non-zero coefficients:
\newline\centerline{
$
\sparpoly(x)=\sum_{j=1}^%
\gendeg c_jx^{e_j}=\sum_{j=1}^%
\gendeg c_jm_j %
\neq 0, e_j\in \ZZ.
$
}%

\begin{theorem}%
\label{Alg:B-OTi}
{\upshape\citep{B-OTi88}} %
Let $b_j=%
\alpha^{e_j}$, where $\alpha$ is a value from the coefficient domain to be
specified later, let
$
a_i$ $=$ $f(%
\alpha^i)$ $=$ $\sum_{j=1}^%
\gendeg c_j b_j^i,%
$
and let
$\genf(z)=\prod_{j=1}^\gendeg (z-b_j) = z^\gendeg +\lambda_{\gendeg-1}z^{\gendeg-1}+\dots+\lambda_0$.
The sequence %
$\seq{a_0,a_1,\ldots}$ is linearly
generated by the minimal polynomial %
$\genf(z)$.
\end{theorem}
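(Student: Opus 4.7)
The plan is to establish the statement in two parts: first show that $\genf(z)$ generates the sequence $\seq{a_i}$, then show that it is the minimal such generator. Both parts hinge on $\genf$ factoring as $\prod_j (z - b_j)$ with the $b_j$ distinct (an implicit assumption on the later-specified $\alpha$, namely that $\alpha^{e_j}$ are pairwise distinct, e.g.\ because $\alpha$ has order exceeding $\max_{j,k}|e_j-e_k|$).

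For the generation part, I would exploit that each $b_j$ is a root of $\genf$, so $b_j^\gendeg = -\sum_{i=0}^{\gendeg-1} \lambda_i b_j^i$. Multiplying by $c_j b_j^k$ and summing over $j$ from $1$ to $\gendeg$ interchanges the two sums and yields
\[
a_{k+\gendeg} \;=\; \sum_{j=1}^\gendeg c_j b_j^{k+\gendeg} \;=\; -\sum_{i=0}^{\gendeg-1}\lambda_i \sum_{j=1}^\gendeg c_j b_j^{k+i} \;=\; -\sum_{i=0}^{\gendeg-1}\lambda_i\, a_{k+i},
\]
for every $k\ge 0$, which is exactly the recurrence encoded by $\genf$.

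For minimality, I would proceed by contradiction: suppose $\geng(z)=z^s+\sum_{i=0}^{s-1}\gamma_i z^i$ with $s<\gendeg$ also generates the sequence. Substituting $a_\ell=\sum_j c_j b_j^\ell$ into the recurrence $a_{k+s}+\sum_{i=0}^{s-1}\gamma_i a_{k+i}=0$ and interchanging sums gives $\sum_{j=1}^\gendeg c_j b_j^k\, \geng(b_j)=0$ for all $k\ge 0$. Stacking $k=0,1,\dots,\gendeg-1$ produces a Vandermonde system in the nodes $b_1,\dots,b_\gendeg$; since these are distinct the matrix is invertible, forcing $c_j\geng(b_j)=0$ for every $j$, and then $c_j\neq 0$ forces $\geng(b_j)=0$ for all $j=1,\dots,\gendeg$. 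A polynomial of degree $s<\gendeg$ cannot have $\gendeg$ distinct roots, a contradiction.

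The only subtle step, and the one I expect to need the most care, is the distinctness of the $b_j$: the statement postpones the choice of $\alpha$, so I would state explicitly the hypothesis under which $\alpha^{e_j}$ are pairwise distinct (e.g.\ $\alpha$ of sufficiently large order, or $\alpha$ transcendental) and observe that both the generation argument and the Vandermonde invertibility rely on it. Everything else is routine manipulation of sums and the standard Vandermonde nonsingularity.
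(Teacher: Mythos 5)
Your proof is correct and is the standard Prony/Ben-Or--Tiwari argument; the paper states this theorem only as a citation to \cite{B-OTi88} and supplies no proof of its own, so there is no in-paper argument to diverge from. Both halves of your proposal are sound --- the recurrence follows from $\genf(b_j)=0$ by multiplying by $c_j b_j^k$ and summing, and minimality follows from the nonsingular Vandermonde system in the distinct nodes $b_j$ together with $c_j\neq 0$ --- and you rightly isolate the one deferred hypothesis, namely that the $b_j=\alpha^{e_j}$ be pairwise distinct, which the paper's eventual choice of $\alpha$ as a primitive $m$-th root of unity with $\deg f<m$ guarantees.
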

The Blahut/Ben-Or/Tiwari algorithm then proceeds in the four following steps:
\par\noindent\hangindent=1.3em\hangafter=1 %
1.
Find the minimal%
-degree generating polynomial %
$\genf$ for %
$\seq{a_0,a_1,\ldots}$, using %
the \BMA.
\par\noindent\hangindent=1.3em\hangafter=1
2.
Compute %
the roots $b_j$ of $\genf$,
using univariate polynomial factorization.
\par\noindent\hangindent=1.3em\hangafter=1
3.
Recover the exponents $e_j$ %
of $\sparpoly$, %
by repeatedly dividing $b_j$ by %
$\alpha$. %
\par\noindent\hangindent=1.3em\hangafter=1
4.
Recover the coefficients $c_j$ %
of $\sparpoly$, by solving the transposed %
$\gendeg\times\gendeg$
  Vandermonde system

\centerline{$
\begin{bmatrix}
1      & 1      & \dots  & 1 \\
b_1    & b_2    & \dots  & b_\gendeg \\ %
\vdots & \vdots & \ddots & \vdots \\
b_1^{\gendeg-1}  & b_2^{\gendeg-1}  & \dots  & b_\gendeg^{\gendeg-1} 
\end{bmatrix} 
\begin{bmatrix}
c_1\\c_2\\\vdots\\ c_\gendeg %
\end{bmatrix}
=
\begin{bmatrix}
a_0\\a_1\\\vdots\\ a_{\gendeg-1} %
\end{bmatrix}.
$}
\par
By Blahut's theorem, 
the sequence $\seq{a_i}_{i\geq0}$ has linear complexity %
$\gendeg$,
hence only %
$2\gendeg$ coefficients suffice for %
the \BMA to recover
the minimal polynomial %
$\genf$.
In the presence of errors %
in some of the evaluations, this fails.

\section{Sparse interpolation codes}
\label{sec:sic}

\begin{definition}
Let $\KK$ be a field, $0<n\leq m$, two integers and 
let 
$x_0,\dots,x_{n-1}$ be 
$n$ distinct elements of\/ $\KK$. 
    A sparse polynomial evaluation code of parameters $(n,\spbound)$ over~$\KK$
    is defined as the set 
\[
\begin{split}
\mathcal{C}(n,\spbound)=\left\{ (f(x_0), f(x_1),\dots,f(x_{n-1})) :
    f\in \KK[z]\right.\\ \left.  \text{ is } t\text{-sparse \allowbreak
with } t \leq \spbound \text{ and } \deg f<m\right\}
\end{split}
\]
\end{definition}

In order to benefit from Blahut/Ben-Or/Tiwari  
algorithm for error free
interpolation, we will consider, until section~\ref{sec:charzero}, the special
case where the evaluation points are 
consecutive powers of a primitive $m$-th root of unity $\alpha \in \KK$:
$x_i=\alpha^i$. In this context, we can state the minimum distance of such codes
provided that $2\spbound$ divides $m$.

\begin{theorem}\label{th:mindist}
If $\alpha\in \KK$ is a primitive $m$-th root of unity, $x_i=\alpha^i, \
i\in\{0,\dots,n-1\}$ and 
$2\spbound$ divides $m$, then the corresponding $(n,\spbound)$-sparse polynomial 
evaluation code has minimum distance  $\delta
= \lfloor \frac{n}{2\spbound}\rfloor$. 
\end{theorem}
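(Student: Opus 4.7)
The plan is to prove the claimed value of $\delta$ by matching lower and upper bounds on the minimum distance of the code.

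For the lower bound, I would take two distinct codewords produced by $T$-sparse polynomials $f_1 \neq f_2$ and study their difference $g = f_1 - f_2$, a nonzero polynomial of degree less than $m$ with at most $2T$ nonzero terms. By Theorem~\ref{Alg:B-OTi} applied to $g$, the sequence $a_i = g(\alpha^i)$ is linearly generated by $\prod_{j}(z - \alpha^{e_j})$ (where the $e_j$ are the exponents of $g$), a polynomial of degree at most $2T$. The key observation is that $2T$ consecutive zeros among the $a_i$ would, via this recurrence together with the $m$-periodicity of $(a_i)$, force the entire sequence to vanish, contradicting $g \neq 0$. Hence every window of $2T$ consecutive indices in $\{0,1,\ldots,n-1\}$ contains at least one nonzero $a_i$, so the two codewords differ in at least $\lfloor n/(2T)\rfloor$ positions.

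For the matching upper bound, I would construct explicit $T$-sparse $f_1, f_2$ attaining this distance. Let $\beta = \alpha^{m/(2T)}$, a primitive $(2T)$-th root of unity (well defined since $2T \mid m$), write $n = 2Tq + r$ with $0 \leq r < 2T$, pick any $\ell \in \{r, r+1, \ldots, 2T-1\}$, and form the $2T$-sparse polynomial
\[
g(z) \;=\; \sum_{k=0}^{2T-1} \beta^{-\ell k}\, z^{k m/(2T)}.
\]
Direct computation yields $g(\alpha^i) = \sum_{k=0}^{2T-1} \beta^{k(i-\ell)}$, which equals $2T$ when $i \equiv \ell \pmod{2T}$ and $0$ otherwise; the constraint $\ell \geq r$ ensures that exactly $q = \lfloor n/(2T)\rfloor$ of the indices $i \in \{0,1,\ldots,n-1\}$ meet this congruence. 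Partitioning the $2T$ monomials of $g$ into two groups of $T$ to form $f_1$ and $-f_2$ then yields two $T$-sparse polynomials of degree less than $m$ whose codewords differ in exactly $\lfloor n/(2T)\rfloor$ positions.

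The main obstacle I expect is the off-by-one issue when $2T \nmid n$: the naive construction $g(z) = 1 + z^{m/(2T)} + \cdots + z^{(2T-1)m/(2T)}$ (the $\ell = 0$ choice) produces $\lceil n/(2T)\rceil$ nonzero evaluations, overshooting by one when $r > 0$. The coefficient twist by $\beta^{-\ell}$ with $\ell$ chosen in the \emph{tail} $\{r,\ldots,2T-1\}$ is the combinatorial trick that shifts the nonzero residue class so as to hit only $\lfloor n/(2T)\rfloor$ of the indices; the lower-bound half is otherwise a routine application of Ben-Or/Tiwari plus the no-window-of-zeros property of linearly recurring sequences.
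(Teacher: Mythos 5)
Your proof is correct and takes essentially the same route as the paper's: the lower bound comes from $\lfloor n/(2T)\rfloor$ disjoint windows of $2T$ consecutive evaluations on which a nonzero $\le 2T$-sparse difference cannot vanish (the paper argues via a nonsingular $2T\times 2T$ Vandermonde system where you propagate the linear recurrence plus $m$-periodicity --- equivalent arguments), and the upper bound comes from an explicit pair of $T$-sparse polynomials supported on the exponents $km/(2T)$ whose difference is nonzero exactly on one residue class modulo $2T$. Your phase twist with $\ell$ in the tail $\{r,\dots,2T-1\}$ is exactly what the paper's sequences $x,y$ (generated by $z^T-1$ and $z^T+1$, differing only at indices $\equiv 2T-1 \pmod{2T}$) achieve, so the two constructions coincide up to normalization.
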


The following  proof is adapted from ~\cite[\S 2.1]{CKP12}.
\begin{proof}
Let $\bar{0}$ denote the zero vector of length $\spbound-1$.
Consider two infinite sequences :
\begin{eqnarray*}
x&=&(\bar{0},1,\bar{0},1,\ldots)\\
y&=&(\bar{0},1,\bar{0},-1,\bar{0},1,\bar{0},-1,\ldots)
\end{eqnarray*}
formed by the repetition of their first $2\spbound$ values and the corresponding
vectors $x^{(n)},y^{(n)} \in \KK^n$ and $x^{(m)},y^{(m)} \in \KK^m$  formed by
respectively the first $n$ and first $m$ values of these sequences. 
The sequence $x$ is generated by $z^\spbound-1$ and $y$
by $z^\spbound+1$, both are $m$ periodic as $2\spbound$ divides $m$.
Lastly, let $\hat x^{(m)}
= \text{DFT}_\alpha^{-1}(x^{(m)})=\frac{1}{m}\text{DFT}_{\alpha^{-1}}(x^{(m)})$. 
From Blahut's theorem, $\hat x^{(m)}$ has Hamming weight $\spbound$. 
By identification between $\KK^m$ and $\KK[z]_{<m}$, $\hat x^{(m)}$ corresponds to
a polynomial $f_x$ of degree less than $m$ and sparsity $\spbound$. 
Hence $x^{(n)} =
(f_x(\alpha^0),f_x(\alpha^1),\dots,f_x(\alpha^{n-1}))$ is a code word of an
$(n,\spbound)$-sparse evaluation code. Similarly $y^{(n)}$ is also a code-word.
More precisely one verifies that 
\begin{eqnarray*}
f_x(z)&=&\frac{1}{T}\sum_{i=0}^{T-1}z^{2i\frac{m}{2T}} 
          = \frac{1}{T}\frac{z^m-1}{z^{\frac{m}{T}}-1}, \\
f_y(z)&=&\frac{-1}{T}\sum_{i=0}^{T-1}z^{(2i+1)\frac{m}{2T}} 
          = \frac{-z^{\frac{m}{2T}}}{T}\frac{z^m-1}{z^{\frac{m}{T}}-1}. 
\end{eqnarray*}

Since $x^{(n)}$ and $y^{(n)}$ differ by exactly $\lfloor \frac{n}{2\spbound} \rfloor$
values, this is an upper bound on the minimum distance~$\delta$.

Now consider any pair of distinct code-words $x$ and $y$ and consider their
$\lfloor\frac{n}{2\spbound}\rfloor$ sub-vectors
$$
\begin{array}{ll}
x^{(1)}=(x_1,\dots,x_{2\spbound}), & y^{(1)}=(y_1,\dots,y_{2\spbound})\\
x^{(2)}=(x_{2\spbound+1},\dots,x_{4\spbound}),& y^{(2)}=(y_{2\spbound+1},\dots,y_{4\spbound})\\
\vdots & \vdots\\
x^{(\lfloor\frac{n}{2\spbound}\rfloor)},&  y^{(\lfloor\frac{n}{2\spbound}\rfloor)}\\
\end{array}
$$

If for some $i$,  $x^{(i)}=y^{(i)}$ then the vector $z^{(i)}=x^{(i)}-y^{(i)}$ is all
zero and is the evaluation of a less than $2\spbound$-sparse polynomial $f-g$. Solving the $2\spbound\times
2\spbound$ corresponding Vandermonde system yields $f=g$ which is a contradiction.
Hence $x$ and $y$ differ in at least $\lfloor\frac{n}{2\spbound}\rfloor$ positions, and
consequently $\delta=\lfloor\frac{n}{2\spbound}\rfloor$.
\end{proof}

\myparagraph{Unique decoding}
There exists an algorithm that does unique decoding of such codes up to half the
minimum distance: the Majority Rule \BMA~\cite{CKP12}.
It simply consists in running a \BMA on each of the
$\lfloor\frac{n}{2\spbound}\rfloor$ contiguous sub-sequences
$x^{(i)} = (x_{2\spbound i},\ldots,x_{2\spbound(i+1)-1})$
of the received word $x$. If
$E< \lfloor\frac{n}{2\spbound}\rfloor/2$ errors occurred, then the generator occurring
with majority will be the correct one. We refer to~\cite{CKP12} for further
explanations on how to then recover the correct code-word using sequence clean-ups.
Equivalently, this algorithm guaranties to find the unique code-word provided that  $E$
errors occured whenever $n\geq 2\spbound(2E+1)$.
This decoding requires $\lfloor n/2\spbound \rfloor$ executions of \BMA.

\myparagraph{List decoding}
Following the same idea, one remarks that if  $n\geq 2\spbound(E+1)$ then necessarily,
one sub-sequence $x^{(i)}$ has to be clean of errors and the list of all
$\lfloor\frac{n}{2\spbound}\rfloor$ generators contains the correct one.
This makes a trivial list decoding algorithm up to the minimum distance (see \cite{CKP12} for further
details on how to recover the code-word using sequence clean-ups).

In order to further reduce the bound $n\geq 2\spbound(E+1)$ (or equivalently increase
the decoding radius above $\frac{n}{2T}$), we will study in
Section~\ref{sec:affineseq} an alternative list decoding algorithm.
Beforehand, we want to address a common remark on the choice of the sub-vectors
used for the unique and list decoding above. 
\begin{remark}\label{rem:allsubseq}
Instead of partitioning the received word into $n/(2T)$ 
disjoint sub-vectors, one would hope to find more
error-free sequences by considering all 
$n-2T+1$ 
sub-vectors of the form
$(x_{i},\ldots, x_{i+2T-1})$. This will very likely allow to decode more errors
in many cases (as will be illustrated in Figure~\ref{fig:randomerr:allsubseq}), but the
worst case configuration (see proof of Theorem~\ref{th:mindist}) remains
unchanged. 
Note that the majority rule based unique decoding still works under
the same conditions: at most $2TE$ sub-sequences will contain an error, hence 
a majority of subsequences will be correct as soon as $4TE<n-2T+1$, which is
$n\geq 2T(2E+1)$.
In terms of complexity, the number of arithmetic operations required for both unique and
list decoding algorithms in~\cite{CKP12} is $\GO{n^2}$ ($n/(2T)$ runs of \BMA on sequences of
length $2T$, and $\GO{n/(2T)}$ calls to the sequence clean-up, each of which costs $\GO{nT}$).
Now the above variant requires to inspect $n-2T$ sub-sequences instead of $n/(2T)$ and
the complexity becomes 
$\GO{n^2T}$ (as $T=o(n)$).
\end{remark}

\section{Affine sub-sequences}
\label{sec:affineseq}
Consider a sequence $(a_0,\ldots,a_{n-1})$ of evaluations of a $t$-sparse
polynomial $f(z)=\sum_{j=1}^t c_jz^{e_j}$, with $E$ errors. 
In our previous work, we used to search for sub-sequences of the form
$(a_i,\ldots,a_{i+k-1})$ formed by $k$ consecutive elements that did not contain
any error. If such a sequence could be found with $k=2t$, then applying
Blahut/Ben-Or/Tiwari 
algorithm on it recovers the polynomial $f$ and makes the
decoding possible.
We now propose to consider all length $k$ sub-sequences
in arithmetic progression:
\[
(a_{r},a_{r+s},a_{r+2s},\ldots,a_{r+(k-1)s}) \text{ where } r+(k-1)s< n,
\]
that  will be called affine index sub-sequences or more conveniently affine
sub-sequences. In the remaining of the text, $k$ will 
denote the length of the sub-sequence. We will consider the general case
where $k$ can be any positive integer, not necessarily even.
\begin{lemma}
If $\gcd(s,m)=1$ and 
$k\geq 2t$,
 then such a sub-sequence with no error is sufficient to
recover $f$.
\end{lemma}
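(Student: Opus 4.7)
The plan is to show that the hypothesis on $s$ reduces the affine sub-sequence problem to an ordinary Blahut/Ben-Or/Tiwari instance for a companion polynomial whose exponents are in bijection with those of $f$.

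First I would rewrite the affine sub-sequence in Prony form. Setting $a_{r+is}=\sum_{j=1}^t c_j\alpha^{(r+is)e_j}=\sum_{j=1}^t c_j'(b_j')^i$ where $c_j'=c_j\alpha^{re_j}$ and $b_j'=\alpha^{se_j}=(\alpha^s)^{e_j}$, the sub-sequence takes exactly the shape handled by Theorem~\ref{Alg:B-OTi}, with $\alpha^s$ playing the role of the base point. The crucial point is that $\alpha^s$ remains a primitive $m$-th root of unity: since $\gcd(s,m)=1$, the order of $\alpha^s$ is $m/\gcd(s,m)=m$. Consequently the map $e\mapsto(\alpha^s)^e$ is injective on $\{0,1,\dots,m-1\}$, so the $t$ values $b_j'$ are pairwise distinct, and the $c_j'$ are nonzero because the $c_j$ are nonzero and $\alpha^{re_j}\neq 0$.

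Next I would run the four steps of the Blahut/Ben-Or/Tiwari algorithm on this sub-sequence, using only the $k\ge 2t$ error-free values. Step 1: since the generating polynomial $\Lambda'(z)=\prod_{j=1}^t(z-b_j')$ has degree $t$ and the sub-sequence is a linear combination of the distinct geometric progressions $(b_j')^i$ with nonzero weights, its linear complexity is exactly $t$; by the Berlekamp/Massey algorithm applied to the $2t$ terms $(a_{r+is})_{i=0,\dots,2t-1}$, we recover $\Lambda'$. Step 2: univariate factorization of $\Lambda'$ yields the roots $b_j'$. Step 3: for each root we compute a discrete logarithm in base $\alpha^s$ (equivalently, successive division by $\alpha^s$) to get $se_j\bmod m$; multiplying by $s^{-1}\bmod m$, which exists precisely because $\gcd(s,m)=1$, recovers $e_j\bmod m$, and this equals $e_j$ because $\deg f<m$. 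Step 4: with the true exponents known, solving the transposed Vandermonde system in the matrix $(b_j^i)$, with right-hand side the $2t$ error-free evaluations of the sub-sequence, produces the modified coefficients $c_j'$; dividing by $\alpha^{re_j}$ yields the original $c_j$.

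The main obstacle is the distinctness argument for the $b_j'$, since without it Berlekamp/Massey returns a polynomial of degree strictly less than $t$ and one cannot match roots to exponents uniquely. This is exactly why the coprimality hypothesis $\gcd(s,m)=1$ is needed, and it is also what lets us invert $s$ modulo $m$ in the exponent-recovery step; every other ingredient is a direct transcription of the ordinary Blahut/Ben-Or/Tiwari interpolation scheme.
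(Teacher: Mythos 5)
Your proof is correct and follows essentially the same route as the paper: substituting $\beta=\alpha^s$ (of order $m$ since $\gcd(s,m)=1$) and absorbing the offset $r$ into the coefficients turns the affine sub-sequence into a standard Blahut/Ben-Or/Tiwari instance for $g(z)=f(z\alpha^r)$ evaluated at consecutive powers of $\beta$, with the exponents recovered because $\beta$ has order at least $\deg g$. One small internal slip in your Step~3: the discrete logarithm of $b_j'=(\alpha^s)^{e_j}$ in base $\alpha^s$ is already $e_j \bmod m$, so you should either take the logarithm in base $\alpha$ and then multiply by $s^{-1}\bmod m$, or take it in base $\alpha^s$ and stop --- doing both as written would yield $s^{-1}e_j \bmod m$ instead of $e_j$.
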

\begin{proof}
  Let $\beta = \alpha^s$ and $g(z)=f(z\alpha^{r})$.
  Note that $\deg g = \deg f$ and $g$ is also $t$-sparse with the same monomial
  support as $f$.
  If $\gcd(s,m)=1$ then $\text{order}(\beta)\geq m$.
  Then the sub-sequence 
$(a_{r},a_{r+s},a_{r+2s},\ldots,a_{r+(k-1)s}) = \allowbreak
(f(\alpha^r),f(\alpha^{r+s}),f(\alpha^{r+2s}),\ldots,f(\alpha^{r+(k-1)s})) = \allowbreak
(g(\beta^0), \allowbreak g(\beta^1),g(\beta^2), \ldots, \allowbreak g(\beta^{k-1})
)  
$
is formed by evaluations of $g$ in $k$ consecutive powers of an element $\beta$ of order
greater than $ m \geq \deg g$.
One can thus compute $g=\sum_{j=1}^td_jz^{e_j}$ using Blahut/Ben-Or/Tiwari 
algorithm on this sub-se\-quence.
The coefficients of $f$ are directly deduced from that of $g$: $c_j = d_j\alpha^{-re_j}$.
\end{proof}

\begin{example}
Let $t=2$, and consider a sequence of $n=9$ evaluations $(a_0,a_1,\ldots,a_8)$.
Then $E=1$ is the maximal number of errors that the list decoding
of~\cite{CKP12} can decode as it requires that $n\geq 2t(E+1)$.
Indeed if two  errors occurred e.g. on elements $a_3$  and $a_7$, 
there is no contiguous sub-sequence of length $2t=4$ free of error, thus making
the latter decoding fail.
Now consider the sub-sequence $(a_0,a_2,a_4,a_6)$. It is free of error and is
formed by evaluations of $f(z)$ in the four consecutive powers of
$\beta=\alpha^2$.  Blahut/Ben-Or/Tiwari
algorithm applied on this sequence will reveal $f$.
\end{example}

\myparagraph{A list decoding algorithm}

This results in a new list decoding algorithm:

\par\noindent\hangindent=1.3em\hangafter=1 %
1. For each affine sub-sequence $(a_{r},a_{r+s},\dots,a_{r+s(k-1)})$ compute a
generator $\Lambda_{r,s}$ with the \BMA
\par\noindent\hangindent=1.3em\hangafter=1 %
2. (Optional heuristic reducing the list size) For each $\Lambda_{r,s}$, run the sequence clean-up
of~\cite{CKP12} and discard it if it can not generate the sequence with less
than $E$ errors, for some  bound~$E$ on the number of errors. 
\par\noindent\hangindent=1.3em\hangafter=1 %
3. For each remaining generator $\Lambda_{r,s}$, apply Blahut/Ben-Or/\-Tiwari
algorithm to recover the associated sparse polynomial $f_{r,s}$.
\par\noindent\hangindent=1.3em\hangafter=1 %
4. Return the list of the $f_{r,s}$.
\newline

A first approach is to explore all sub-sequences for any value of
$s\in\{1\dots \lfloor n/k\rfloor\}$ and  $r\in\{0\dots n-(k-1)s-1\}$.
This amounts to $\GO{n^2/k}$ sub-sequences.
A second approach, applying Remark~\ref{rem:allsubseq} considers all values
for $s\in\{1\dots n/k\}$ but then for each $s$ only considers the disjoint
sub-sequences with $s\frac{n}{ks}=n/k$ choices for $r$. This amounts to
$\GO{n^2/k^2}$ sub-sequences.
For each sub-sequence, corresponding to a pair $(s,r)$, Blahut/Ben-Or/Tiwari
algorithm is run in $\GO{k^2}$ (\BMA and solving the transpose
Vandermonde system~\cite{Zi90}). The
optional sequence clean-up heuristic adds an $\GO{nk}$ term.
Overall, the complexity of the second approach amounts to the same $\GO{n^2}$
estimate, as the list decoding of~\cite{CKP12}. The
additional overhead of \GO{n^3/k} when the sequence clean-up heuristic is used
also remains identical.
In the first approach, ignoring Remark~\ref{rem:allsubseq}, these complexity
estimates are multiplied by a factor $k$.


We implemented the affine sub-sequence search and computed its rate of success
in finding a clean sequence for various values of $E$ and $k$. The error
locations are uniformly distributed. We report in Figures~\ref{fig:randomerr}
and ~\ref{fig:randomerr:allsubseq} the average rate of success over $10\,000$
samples for each value of the pair $(E,k)$.
Figure~\ref{fig:randomerr} uses the search restricted to disjoint
sequences.
\begin{figure}[h!]
\begin{center}
\includegraphics[width=\graphl]{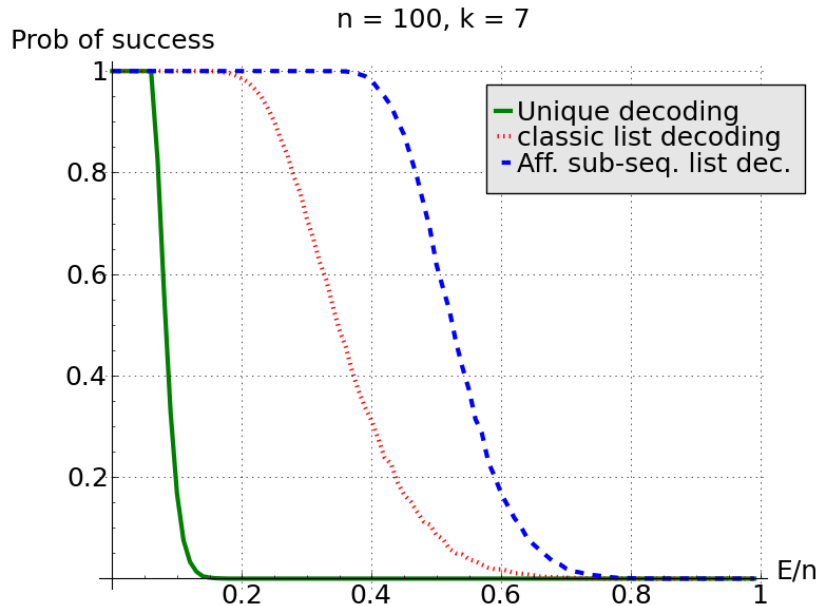}
\end{center}
\caption{Success rate for unique, standard list decoding and affine
sub-sequence list decoding. Only disjoint sub-sequences are considered.}
\label{fig:randomerr}
\end{figure}
whereas Figure~\ref{fig:randomerr:allsubseq} shows the improvement brought by
considering all sub-sequences as proposed in Remark~\ref{rem:allsubseq}. 
\begin{figure}[h!]
\begin{center}
\includegraphics[width=\graphl]{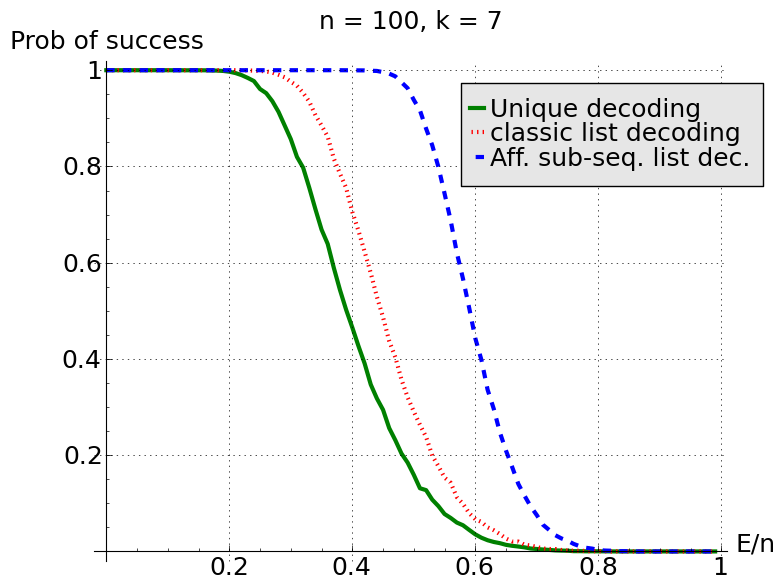}
\end{center}
\caption{Success rate for unique, standard list decoding and affine
sub-sequence list decoding. All sub-sequences are considered.}
\label{fig:randomerr:allsubseq}
\end{figure}
Again this improvement is important in practice, at the expense of a higher
computational complexity for the decoder, but does not improve the
unique 
decoding radius in the worst case.
\vspace{1em}

\section{Worst case decoding radius}

We now focus the worst case analysis: finding estimates on the maximal decoding
radius of the  affine sub-sequence algorithm.
More precisely, we want to determine for fixed $E$ and $t$, the smallest
possible length $n$, such that for any error vector of weight up to $E$, there
always exist at least one affine sub-sequence of length $2t$ with no error. 
This is stated in 
Problem~\ref{pb:affineseq} in the more general setting where the length  $k$ of
the error free sequence need not be even.
\begin{problem}\label{pb:affineseq}
Given $k,E \in \Z_{>0}$, find the smallest $n\in \Z_{>0}$ such that for
all subsets 
$S\subset \{0,\ldots,n-1\}$
with $E$ elements, that is, $|S| = E$, 
\begin{multline}
\label{eq:affineseq}
  \exists r\in \Z_{\geq 0}, \exists 
   s\in \Z_{>0}
\text{ with } r+s(k-1) \le n-1 \colon 
\\
\forall i\text{ with } 0 \le i \le k-1 \colon r+is \not\in S.
\end{multline}
\end{problem}
We will denote by $n_{k,E}$ the minimum 
solution to Problem~\ref{eq:affineseq}. 

In some cases, the affine sub-sequence technique does not help improving
the former bound $n\geq k(E+1)$, not even by saving a single evaluation point.
\begin{example}\label{ex:K5E3}
For $k=5$ and $E=3$, the worst case configuration (errors on $a_4, a_9$ and $a_{14}$)
requires $n_{5,3}=20=k(E+1)$ values to find $k$ 
consecutive clean values.
\vspace{1em}


{
\noindent
 \begin{tabular}{@{\extracolsep{-3mm}}llllllllllllllllll}
\toprule
    $a_0$ &
    $a_1$ &
    $a_2$ &
    $a_3$ &
    $\mathbf{a_4}$ &
    $a_5$ &
    $a_6$ &
    $a_7$ &
    $a_8$ &
    $\mathbf{a_9}$ &    
&
$\ldots$
&
    $\mathbf{a_{14}}$ &
    $a_{15}$ &
    $a_{16}$ &
    $a_{17}$ &
    $a_{18}$ &
    $a_{19}$ \\
\bottomrule
  \end{tabular}
\vspace{5pt}

\noindent
  \begin{tabular}{@{\extracolsep{-3mm}}llllllllll}
\toprule
    $a_0$ & $a_2$&$\mathbf{a_4}$&$a_6$&$a_8$&$a_{10}$ & $a_{12}$ & $\mathbf{a_{14}}$& $a_{16}$ & $a_{18}$\\
    $a_1$ & $a_3$& $a_5$& $a_7$&$\mathbf{a_9}$& $a_{11}$ &$a_{13}$& $a_{15}$ &$a_{17}$& $a_{19}$\\
\bottomrule
  \end{tabular}
\vspace{5pt}

\noindent
\begin{tabular}{@{\extracolsep{-3mm}}lllllll}
\toprule
    $a_0$ & $a_3$ &$a_6$ &$\mathbf{a_9}$ & $a_{12}$ & $a_{15}$ & $a_{18}$\\
    $a_1$ & $\mathbf{a_4}$ & $a_7$ &  $a_{10}$& $a_{13}$ & $a_{16}$ & $a_{19}$ \\
    $a_2$ &  $a_5$ &$a_8$ & $a_{11}$ & $\mathbf{a_{14}}$ & $a_{17}$\\
\bottomrule
  \end{tabular}
%
\hfill
\begin{tabular}{@{\extracolsep{-3mm}}lllll}
\toprule
$a_0$ & $\mathbf{a_4}$ & $a_8$ & $a_{12}$ & $a_{16}$ \\
$a_1$ & $a_5$ & $\mathbf{a_9}$ & $a_{13}$ & $a_{17}$ \\
$a_2$& $a_6$& $a_{10}$& $\mathbf{a_{14}}$& $a_{18}$\\
$a_3$&$a_7$&$a_{11}$ &$a_{15}$&$a_{19}$\\
\bottomrule
  \end{tabular}
\vspace{1em}

}

But for $E=4$, one verifies that $n=21$ suffices to ensure that a length $5$
subsequence will always be found. In particular, in the previous configuration,
placing the fourth error on $e_{19}$ leaves the subsequence
$(a_0,a_5, a_{10},\allowbreak a_{15}, a_{20})$ untouched.
\begin{center}
 \begin{tabular}{@{\extracolsep{-1mm}}lllll}
\toprule
$a_0$ &$a_5$ & $a_{10}$&$a_{15}$& $a_{20}$\\
$a_1$ & $a_6$& $a_{11}$&$a_{16}$ \\
$a_2$& $a_7$ & $a_{12}$&$a_{17}$ \\
$a_3$& $a_8$ & $a_{13}$& $a_{18}$\\
$\mathbf{a_4}$ &$\mathbf{a_9}$& $\mathbf{a_{14}}$&$\mathbf{a_{19}}$\\
\bottomrule
  \end{tabular}
\end{center}
\end{example}

We report in Table~\ref{tab:bestn} and Figure~\ref{fig:bestn} the values of $n_{k,E}$ 
for all typically small values of $E$ and $k$ computed by exhaustive search. We ran a Sage
program\footnote{The code is
available \url{http://membres-liglab.imag.fr/pernet/Depot/ldsic.sage}.} for
about 7    days on  24 cores of an Intel E5-4620 SMP machine.
\begin{table*}
  \begin{center}
    \begin{tabular}{l|rrrrrrrrrrrrrrrr}
      \toprule
E     & 0  & 1  & 2  & 3  & 4  & 5  & 6  & 7  & 8  & 9   &  10& 11 & 12 &  13 & 14 & 15 \\
\midrule
$k=3$ &  3 &  6 &  7 &  8 & 10 & 12 & 15 & 16 & 17 & 18  & 19 & 21 & 22 & 23  & 25 & 27\\
$k=4$ &  4 &  7 & 11 & 12 & 14 & 16 & 18 & 20 & 22 & 24  & 26 & 29 & 31 & 32  & 35 & 36\\
$k=5$ &  5 & 10 & 15 & 20 & 21 & 22 & 23 & 26 & 30 & 32  & 35 & 40 & 45 & 46  &
47 & 48\\
$k=6$ &  6 & 11 & 16 & 21 & 27 & 28 & 30 & 31 & 34 & 38  & 42 & 43 & 47 & 52\\
$k=7$ &  7 & 14 & 21 & 28 & 35 & 42 & 43 & 44 & 45 & 47  & 49 & 54 & 58\\
$k=8$ &  8 & 15 & 22 & 29 & 36 & 43 & 51 & 52 & 53 & 55  & 57 & 60\\
$k=9$ &  9 & 18 & 25 & 32 & 39 & 46 & 53 & 58 & 59 & 62  & 66 & 72\\
$k=10$& 10 & 19 & 29 & 34 & 41 & 48 & 55 & 62 & 65 & 69  & 74\\
$k=11$& 11 & 22 & 33 & 44 & 55 & 66 & 77 & 88 & 99 & 110 & 111 & 112\\
$k=12$& 12 & 23 & 34 & 45 & 56 & 67 & 78 & 89 &100 & 111 & 123 & 124\\
$k=13$& 13 & 26 & 39 & 52 & 65 & 78 & 91 & 104&117 & 130 & 143 & 156 \\
\bottomrule
    \end{tabular}
    \caption{Solution $n_{k,E}$ to Problem~\ref{pb:affineseq} for the first values of $k$ and $E$}\label{tab:bestn}
  \end{center}
\end{table*}

\begin{figure}[h!]       
  \includegraphics[width=\graphl]{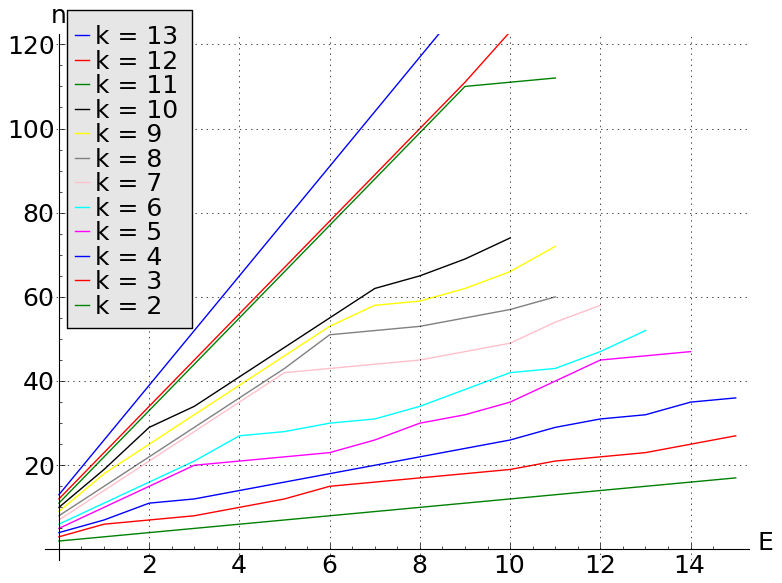}
  \caption{Solutions to Problem~\ref{pb:affineseq} for the first values of $k$
  and $E$}\label{fig:bestn}
\end{figure}

\begin{figure}[h!]
        \includegraphics[width=\graphl]{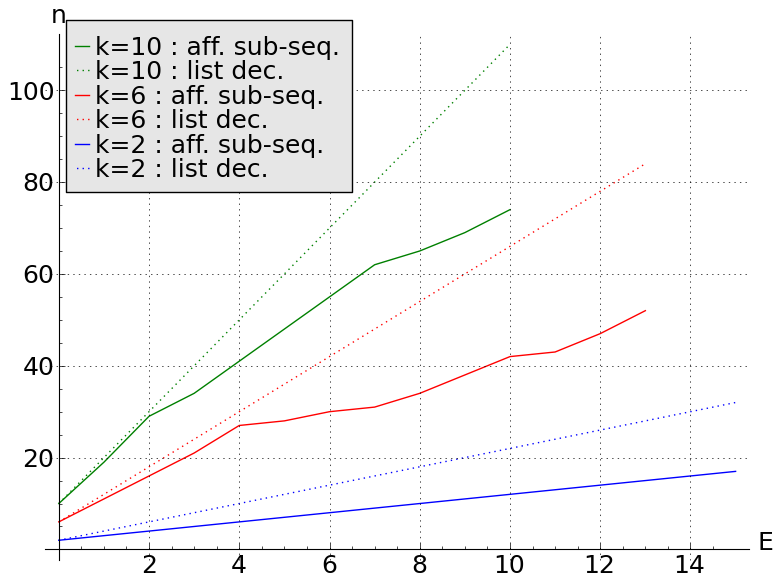}
\caption{Improvement of the affine subsequence approach over the list
decoding of~\cite{CKP12}}
\label{fig:comp}
\end{figure}

In particular Figure~\ref{fig:comp} shows the improvement of the affine
sub-sequence technique over the previous list decoding algorithm, requiring
$n= k(E+1)$, for some even values of the sub-sequence length $k$.
%
%
These data indicate that the optimal value for the length $n$ is improved in
almost any case 
except when $k$ is prime and  $E<k-1$, as in
Example~\ref{ex:K5E3}. Lemma~\ref{lem:prime} states more precisely at which 
condition the new algorithm does not improve the value $n=k(E+1)$ of the former
list decoding.

\begin{lemma}\label{lem:prime}
We have for the minimum solution $n_{k,E}$ of Problem\/~{\upshape \ref{pb:affineseq}:}
$n_{k,E} \le k(E+1)$, with equality $n_{k,E}\text{=}k(E\text{$+$}1)$
if and only if $E+2\le g$, where 
$g$ denotes the smallest 
prime factor of $k$.
\end{lemma}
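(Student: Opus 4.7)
My plan is to prove the trivial upper bound first, then split the ``iff'' into its two implications, each of which reduces to a single explicit configuration once the right extremal observation is in place.

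For the upper bound $n_{k,E}\le k(E+1)$, I would partition $\{0,1,\ldots,k(E+1)-1\}$ into the $E+1$ consecutive blocks of length $k$. Since $|S|=E$, pigeonhole gives an error-free block, which is an affine sub-sequence with $s=1$.

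For the implication $E+2\le g \Longrightarrow n_{k,E}\ge k(E+1)$, I would exhibit a bad configuration at length $n=k(E+1)-1$: place the $E$ errors at positions $k-1,\,2k-1,\ldots,Ek-1$ (all in residue class $-1\pmod{k}$) and show that every length-$k$ affine sub-sequence $r,r+s,\ldots,r+(k-1)s$ fitting in $\{0,\ldots,n-1\}$ hits one. The case split is on $d=\gcd(s,k)$. If $d=1$, the $k$ terms cover all residues modulo $k$, so exactly one is $\equiv -1\pmod{k}$; since it lies in $[0,(E+1)k-2]$, it must coincide with one of $k-1,\ldots,Ek-1$. If $d>1$, then $d\ge g\ge E+2$ and hence $s(k-1)\ge (E+2)(k-1)>(E+1)k-2=n-1$, using $k\ge g\ge E+2$; so the sub-sequence simply cannot fit.

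For the converse, I would show that if $g\le E+1$ then already $n=k(E+1)-1$ admits a clean sub-sequence for every error set $S$ of size $E$. The key step is a tight counting argument applied to $s=1$: there are $Ek$ starting positions $r\in[0,Ek-1]$, each error $e$ blocks at most $k$ of them, and the only way to block all is to have each error block a full window $[e-k+1,e]\subseteq[0,Ek-1]$ with pairwise disjoint windows. This forces $e_i=ik-1$, i.e.\ the unique blocking configuration is $S=\{k-1,2k-1,\ldots,Ek-1\}$. In that remaining case I would take $s=g$ and $r=0$: the sub-sequence $0,g,2g,\ldots,(k-1)g$ lies entirely in the residue class $0\pmod g$, so it misses residue $-1\pmod k$ (because $g-1\not\equiv 0\pmod g$) and therefore avoids $S$; it fits in $[0,n-1]$ because $(k-1)g\le (E+1)k-2$ follows from $g\le E+1$ together with $g\ge 2$.

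The main obstacle is the sharp extremal step that identifies $\{k-1,\ldots,Ek-1\}$ as the unique obstruction for contiguous windows; once this reduction is secured, the two implications collapse to parallel computations with the same explicit sub-sequence $0,g,2g,\ldots,(k-1)g$, and the threshold $g=E+2$ emerges cleanly from the fitting inequality on both sides.
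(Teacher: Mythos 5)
Your proof is correct and follows essentially the same route as the paper's: the pigeonhole upper bound, the extremal configuration $S=\{k-1,2k-1,\dots,Ek-1\}$, the residue-mod-$k$ argument for progressions with $\gcd(s,k)=1$, and the progression $0,g,2g,\dots,(k-1)g$ to defeat that configuration when $g\le E+1$. The only differences are organizational: you case-split on $\gcd(s,k)$ where the paper first shows $s<g$ from the fitting constraint, and you identify the unique blocking configuration by a counting/tiling argument on the $Ek$ contiguous windows where the paper deduces it iteratively block by block.
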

In particular, this implies that for
even 
$k=2t$, the new list decoder performs always better.
\begin{proof}
    %
Let $n=k(E+1)$. Splitting $\{0,\ldots,n-1\}$ into $E+1$ contiguous disjoint sets
$V_i$ of $k$ elements 
shows that no subset of $E$ elements of  $\{0,\ldots,n-1\}$ can intersect all
of the $V_i$'s at the same time.
Hence $n_{k,E} \leq (E+1)k$. 

We will denote by $P_{r,s}$ the arithmetic progression 
$\{r$, $r+s$, $\ldots$, $r+(k-1)s\}$.

Suppose $n=k(E+1)
= n_{k,E}$. 
Then there is a subset $S$ of $E$ elements of
$\{0,\dots,n-2\}$ that intersects all $P_{r,s} \subset \{0,\dots,n-2\}$. We will show that
$S=\{k-1,2k-1,\dots,Ek-1\}$. 
Indeed, as the $E$ segments 
$P_{ik,1}$ for $0\leq i \leq E-1$ are disjoint,
 each of them must contain exactly one element of $S$. Hence,
 $\{Ek,\dots,Ek +k-2\}\cap S =\emptyset$, and therefore $Ek-1 \in S$, otherwise
 $P_{Ek-1,1}\cap S=\emptyset$. By the same argument, we deduce iteratively that
 $ik-1 \in S$ for all $i\leq E-1$. It follows that $E+1<g$, otherwise $n\geq
 kg$ and $P_{0,g}\subset \{0,\dots,n-2\}$ but $P_{0,g} \cap
 S=\emptyset$ since $P_{0,g} 
\bmod g = \{0\}$ and $S \bmod g = \{k-1\}$.

Now suppose $E+1<g$.
We will show that $S=\{k-1,2k-1,\dots,Ek-1\}$ intersects all
$P_{r,s}\subset \{0,
1,\ldots, 
k(E+1)-2\}$ from which we shall deduce that
$n_{k,E}=k(E+1)$.

First note that $s<g$: otherwise
$r+g(k-1)\leq r+s(k-1)
\leq k(E+1)-2 
\leq k(g-1)-2$ would imply
$r+k \le g-2$, 
which is absurd.
%
Hence $s$ is co-prime with $k$. 
Therefore $P_{r,s} \bmod k  = \{0,1,\ldots,k-1\}$,
and 
$ \exists j\leq k-1$ and $q\in \Z_{\geq 0}$ such that 
$r+js = k-1 + qk = (q+1)k-1$.
As $r+js\leq(E+1)k-2$ we have $q<E$.
Hence $r+js \in  S$ and finally $P_{r,s}\cap S \neq \emptyset$.
\end{proof}
 
Note that the  solution $n_{k,E}$ is also strictly increasing in both $k$ and $E$:
$\forall k>0, E\geq0 \
\left\{\begin{array}{l}n_{k+1,E} > n_{k,E}\\ n_{k,E+1} > n_{k,E}\end{array}\right.$.
This implies that $n_{k,E}$ always verifies 
$p(E+1) \leq n_{k,E}\leq k(E+1)$ where $p$ is the prime previous to $k$ for
$E<p-1$.
Corollary~\ref{cor:lowerbound} gives a lower bound on $n_{k,E}$.
\begin{cor}\label{cor:lowerbound}
If $E<\frac{k+1}{2}$, then $\frac{k+1}{2}(E+1)+1<n_{k,E}$
for any $k>0$.

\end{cor}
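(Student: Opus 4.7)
The plan is to combine the strict monotonicity $n_{k+1,E}>n_{k,E}$ recorded immediately before the corollary with Lemma~\ref{lem:prime} and Bertrand's postulate. Iterating strict monotonicity gives $n_{k,E}\ge n_{p,E}+(k-p)$ for every $p\le k$, and if $p$ is a prime with $p\ge E+2$ then Lemma~\ref{lem:prime} asserts $n_{p,E}=p(E+1)$ (since $g(p)=p\ge E+2$), so that
\[
n_{k,E}\ge p(E+1)+(k-p)=pE+k.
\]
A short algebraic manipulation shows that the target $pE+k>\tfrac{k+1}{2}(E+1)+1$ is equivalent to $E(2p-k-1)>E+3-k$, which I will close by producing a suitable prime $p$ from Bertrand's postulate.

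I would then split on the size of $k$ relative to $2E$. In the regime $k\in\{2E+1,2E+2\}$, applying Bertrand's postulate to $n=E+1$ furnishes a prime $p\in[E+2,2E+1]\subseteq[E+2,k]$. Here $2p-k-1\ge 2(E+2)-(2E+2)-1=1$ while $E+3-k\le 2-E\le 1$, and the strictness of the desired inequality is obtained from the integer constraints together with $E\ge 1$. In the regime $k\ge 2E+3$, I instead apply Bertrand's postulate to $\lfloor k/2\rfloor$, obtaining a prime $p$ with $\lceil(k+1)/2\rceil\le p\le k$; in this range $p\ge E+2$ automatically (since $(k+1)/2\ge E+2$), the bound $2p-k-1\ge 0$ holds, and the residual inequality reduces to $k>E+3$, which follows from $k\ge 2E+3\ge E+4$ for $E\ge 1$.

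The one configuration that escapes this machinery is $k=2E$, where Bertrand's prime $2E+1$ falls strictly outside $[E+2,k]$. Under the tacit assumption $k\ge 3$ of the table (the corollary is vacuous or false for $k\le 2$), the only instance left is $(k,E)=(4,2)$, which I would dispatch by the direct construction $S=\{3,4\}\subset\{0,1,\dots,7\}$: a quick enumeration of the seven length-$4$ arithmetic progressions in $\{0,\dots,7\}$ (five of step $1$, and the two step-$2$ progressions $\{0,2,4,6\}$ and $\{1,3,5,7\}$) shows each meets $S$, giving $n_{4,2}\ge 9>\tfrac{5}{2}\cdot 3+1=\tfrac{17}{2}$.

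The hard part will be verifying that $E(2p-k-1)>E+3-k$ remains strict in the tight sub-cases, the smallest being $(k,E,p)=(3,1,3)$ where the two sides evaluate to $2$ and $1$; the slack always comes from integrality of $p$, $k$, $E$ together with the forced lower bound $k\ge 2E+1$ from the hypothesis. The edge case $(4,2)$ is a reminder that Bertrand's postulate alone is not quite strong enough on the subrange $[E+2,k]$, and a small ad hoc construction is needed to cover it.
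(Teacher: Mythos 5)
Your strategy is essentially the paper's: combine Bertrand's postulate with Lemma~\ref{lem:prime} (which gives $n_{p,E}=p(E+1)$ for a prime $p\ge E+2$) and the strict monotonicity of $n_{k,E}$ in $k$. The paper does this in one stroke, producing for $k\ge 6$ a prime $p$ with $\lceil\frac{k+1}{2}\rceil<p<2\lceil\frac{k+1}{2}\rceil-2\le k$ and checking $k\le 5$ against Table~\ref{tab:bestn}; your version is a more explicit three-regime case analysis, and your refinement $n_{k,E}\ge n_{p,E}+(k-p)$ is a clean way to secure the ``$+1$'' in the conclusion, about which the paper's one-line chain of inequalities is in fact somewhat cavalier.

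There is, however, a genuine gap: the boundary $k=2E$ (admissible, since $E<\frac{k+1}{2}=E+\frac{1}{2}$) is not covered for $E\ge 3$. Your first regime needs $k\ge 2E+1$ so that $[E+2,2E+1]\subseteq[E+2,k]$, your second needs $k\ge 2E+3$ so that $(k+1)/2\ge E+2$, and for $k=2E$ you treat only $(k,E)=(4,2)$, asserting without argument that it is ``the only instance left.'' For $k=2E$ with $E\ge 3$ you need a prime in $[E+2,2E-1]$ (note $p=k=2E$ itself is composite, and you need $p<k$ to gain the increment from monotonicity), and the weak form of Bertrand you invoke ($n<p<2n$), applied to $E$ or to $E+1$, can a priori return only $E+1$ or $2E+1$, both outside that window. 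The gap is easily filled: the sharper classical form, ``for $n>3$ there is a prime $p$ with $n<p<2n-2$,'' applied to $n=E+1$ yields exactly a prime in $[E+2,2E-1]$ once $E\ge 3$ --- and this is the form the paper itself uses --- but as written your case analysis does not close. Two smaller points: your reduction of $pE+k>\frac{k+1}{2}(E+1)+1$ to $E(2p-k-1)>E+3-k$ has an algebra slip (the correct right-hand side is $3-k$; since $E\ge 0$ your version is stronger, so nothing breaks, but your tight sub-cases are looser than you fear); and, like the paper, you quietly assume $E\ge 1$, which is actually necessary: at $(k,E)=(3,0)$ the statement fails, since $n_{3,0}=3=\frac{k+1}{2}(E+1)+1$.
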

\begin{proof}
By Bertrand's postulate~\cite{Cheb52,Ram19}, if $k\geq 6$, there exist a prime $p$ such that 
\[
\left\lceil\frac{k+1}{2}\right\rceil <p< 2\left\lceil\frac{k+1}{2}\right\rceil-2
\leq k.
\]
Therefore $\frac{k+1}{2}(E+1)<p(E+1)=n_{p,E}< n_{k,E}$ if $E<\frac{k+1}{2}$. One
verifies the cases $k \leq 5$ on table~\ref{tab:bestn}.
\end{proof}

However, for larger values of $E$, Figure~\ref{fig:bestn} suggests that $n_{E,k}$
increases at a much lower rate.
We will now focus on the asymptotic behavior of $n_{k,E}$.
In order to find a lower bound on the value of $n$ that solves
Problem~\ref{pb:affineseq}, we
%
%
will construct by induction a subset $S$ producing a large value for $n$.
It is a generalization of the worst case
error vector of Lemma~\ref{lem:prime}.

For all $i\in \Z_{\geq 1}$ let $m_i=k^{i-1}(k-1)$ and define the error vector $v_i$ by the recurrence 
\[\left\{
\begin{array}{l}
  v_1=(\underbrace{0,\ldots,0}_{k-1 \text{ times}}) \in \KK^{m_1}\\
  v_{i+1} = (\underbrace{v_{i},\overbrace{1,\ldots,1}^{k^{i-1}
      \text{ times}},\ldots,v_{i},\overbrace{1,\ldots,1}^{k^{i-1} \text{ times}}}_{k-1
    \text{ times}} ) \in \KK^{m_{i+1}}
\end{array}\right.
\]
Lastly, define $w_i$ as the vector $v_i$ without its trailing $k^{i-2}+k^{i-3}+\dots+1$ ones.
$w_i$ has length 
$$n_i=k^i-k^{i-1}-\dots-1= k^i - \frac{k^i-1}{k-1}=\frac{(k-2)k^i+1}{k-1}.$$ 
The Hamming weight of $v_i$ satifies $w_H(v_1)=0$ and
$w_H(v_{i+1})=(k-1)(w_H(v_{i})+k^{i-1})$ which solves into $w_H(v_i) =
k^{i}-k^{i-1}-(k-1)^{i} = m_i-(k-1)^{i}$.
Finally, $w_i$ has weight $E_i=n_i-(k-1)^{i}$.


\begin{lemma}\label{lem:lowerbound}
 Let $S$ be the support of $v_i$.
 If $k$ is prime, there is no $r\in \Z_{\geq 0}, s\in \Z_{>0}$  with  $r+s(k-1)<n_i$ such that 
$ \{r, r+s,r+2s, \dots r+(k-1)s\} \cap S = \emptyset.$
\end{lemma}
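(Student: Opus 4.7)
My plan is to proceed by induction on $i$, after first establishing a clean base-$k$ characterization of the support of $v_i$. Unrolling the recursive construction, one checks by induction on $i$ that for $0 \le j < m_i$ the entry $v_i[j]$ equals $1$ precisely when the base-$k$ representation of $j$, written as an $i$-digit number, contains at least one digit equal to $k-1$; equivalently, the $(k-1)^i$ zero positions of $v_i$ are those $j$ whose base-$k$ digits all lie in $\{0,1,\ldots,k-2\}$, matching the Hamming weight recalled in the excerpt. I would also record the identity $n_{i+1} = k\, n_i - 1$, which follows at once from the closed form $n_i = k^i - k^{i-1} - \cdots - 1$ and drives the recursion.

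The base case $i=1$ is vacuous, because $n_1 = k-1$ leaves no room for an AP of $k$ terms with positive step. For the inductive step I take $P = \{r+js : 0 \le j \le k-1\}$ with $r+(k-1)s \le n_{i+1}-1 = k\, n_i - 2$ and split on whether $k$ divides $s$. If $k \nmid s$, primality of $k$ makes $s$ invertible modulo $k$, so the residues $(r+js)\bmod k$ cover $\{0,\ldots,k-1\}$; hence some element has last base-$k$ digit equal to $k-1$ and therefore lies in $S$. If $k \mid s$, I write $s = ks'$, so every element of $P$ shares the residue $d_0 = r \bmod k$. If $d_0 = k-1$ every element is in $S$; otherwise, setting $r' = (r-d_0)/k$, I apply the inductive hypothesis to the ``upper-digit'' progression $P' = \{r'+js' : 0 \le j \le k-1\}$, producing some $r'+j_0 s'$ whose base-$k$ expansion carries a digit equal to $k-1$. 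That same digit appears one position higher in the expansion of $r+j_0 s = k(r'+j_0 s')+d_0$, so $r+j_0 s \in S$.

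The main obstacle I expect is verifying the boundary inequality that licenses invoking the inductive hypothesis on $P'$: from $r+(k-1)s \le k\, n_i - 2$ and $s = ks'$, substituting $r = kr' + d_0$ yields
\[
r' + (k-1)s' \;\le\; \frac{k\, n_i - 2 - d_0}{k} \;=\; n_i - \frac{2+d_0}{k}.
\]
Since $d_0 \ge 0$ and the left side is an integer, this forces $r' + (k-1)s' \le n_i - 1 < n_i$, which is exactly what the inductive hypothesis demands. The ``$-1$'' in $n_{i+1} = k\, n_i - 1$ is essential here, and is the combinatorial reason the lemma is stated for $w_i$ (obtained by trimming the trailing run of ones of $v_i$) rather than for $v_i$ on its full length $m_i$: absent this trim, the reduced progression could land exactly at index $n_i$ and the recursion would stall.
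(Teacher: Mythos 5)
Your proof is correct and rests on the same two ingredients as the paper's own argument: the base-$k$ characterization of the support of $v_i$ (an index is a one exactly when some base-$k$ digit equals $k-1$), and the fact that, $k$ being prime, a $k$-term progression whose step is coprime to $k$ covers every residue modulo $k$ and so hits a last digit equal to $k-1$. Your induction on $i$, which peels off one digit per step until the reduced step is coprime to $k$, is an inductive repackaging of the paper's direct argument (which locates the digit $k-1$ at position $\ell=$ the $k$-adic valuation of $s$ in one shot); the only quibble is your closing aside — the trim from $v_i$ to $w_i$ is not what makes the recursion close (with $m_{i+1}=k\,m_i$ one still gets $r'+(k-1)s'\le m_i-1$ since $1+d_0>0$), it serves to improve the error-to-length ratio in the lower bound derived after the lemma.
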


\begin{proof}
  Let $r \in \Z_{\geq 0}, s\in \Z_{>0}$ such that $r+(k-1)s<n_i$.
Let $\ell$ be the multiplicity of $k$ in $s$ (possibly zero) and define $\alpha$
and $\beta$ such that $s=\alpha k^\ell + \beta k^{\ell+1}$ with $1\leq \alpha <k$.


Let $\overline{r},\mu,\nu$ be such that $r=\overline{r}+\mu k^\ell+\nu
k^{\ell+1}$ with $0\leq \overline{r}<k^\ell$ and $0\leq \mu<k$.
As $\gcd(\alpha,k)=1$ there exists $1\leq j<k$ such that $j\alpha =k-1-\mu
\mod k$. Hence $j\alpha+\mu = k-1 + \lambda k$ for some $\lambda\in \Z$.
As $j<k$, the set $P_{r,s}$ contains the element $x=r+js$ and we write
\begin{eqnarray*}
x&=&r+js = \overline{r}+(j\alpha+\mu)k^{\ell} + \nu k^{\ell+1}\\
& =&\overline{r}+(k-1)k^{\ell} + (\nu+\lambda) k^{\ell+1}.
\end{eqnarray*}

We now show that the element of index $x$ in $v_i$ is a one. 
In this last expression, the term $(\nu+\lambda)k^{\ell+1}$ indicates that $x$
is located in the $(\nu+\lambda+1)$-st block of the form $(v_{\ell}, \overbrace{1,\ldots,1}^{k^\ell
  \text{ times}})$. Then the term $(k-1)k^\ell=m_\ell$ is precisely the
  dimension of $v_\ell$. Lastly, as
$\overline{r}<k^\ell$, we deduce that the element of index $x$ is a $1$ in~$v_i$.
%
%
\end{proof}






\begin{remark}
As suggested by a referee, we remark that problem~\ref{pb:affineseq} is closely
related to the famous problem of finding the largest sub-sequence of
$\{1,\dots,n\}$ not containing $k$ terms in arithmetic progression. 
Let $r(k,n)$ denoted the size of such a largest sub-sequence.
If $n\geq r(k,n)+E+1$, a subset of $E$ errors can not suffice to
intersect all arithmetic progressions of $k$ terms. Hence $n_{k,E}=\min\{n:
n-r(k,n)\geq E+1\}$. Noting that $r(k,n)\leq r(k,n+1)\leq r(k,n)+1$, we deduce
that for given $k$ and $E$, there always exists a $n^*$ such that $n^*-r(k,n^*)=E+1$ and consequently
$n_{k,E}=n^*=r(k,n^*)+E+1$. 
The value $r(k,n)$ has been first studied by Erd\H{o}s 
and Tur\'an~\cite{ErTu36}
who conjectured that for all $ k\geq 3$, $\lim_{n\rightarrow \infty} r(k,n)/n = 0$ which was proven by Szemeredi~\cite{Sze75}. 
In particular the construction of a bad error vector $w_i$ for $k$ prime has
connections with a construction of~\cite{ErTu36,Wag72}: its support is
formed by any element of $\{1,\dots,n\}$ whose base $k$ expansion contains at
least one digit equal to  $k-1$. 
This yields to the estimate 
\begin{equation}\label{eq:szerekes}
r\left(k,\frac{(k-2)k^i+1}{k-1}\right)\geq (k-1)^i.
\end{equation}
Szerekes conjectured that equality held in~(\ref{eq:szerekes})
(see~\cite{ErTu36}) which was disproved by Salem and Spencer~\cite{SaSp50}. 
\end{remark}

The error correction rate of the affine sub-sequence list decoding is
therefore directly related to the growth of the ratio $r(k,n)/n$ which is a core
problem in additive combinatorics. 
\begin{equation}
\frac{E}{n} = 1-\frac{r(k,n)}{n}-\frac{1}{n}.
\end{equation}
Szemeredi's theorem states that arithmetic progressions are dense,
i.e. an asymptotically large number of errors is necessary to intersect all of
them and rule out any list decoding possibility. Now there is unfortunately no
known expression of  $r(k,n)/n$ as a function of the information rate $k/n$, to
the best of our knowledge and we will now try to estimate bounds on this
decoding capacity. 

The error vectors $w_i$ approach a worst error distribution (but the result of Salem and
Spencer proves that it is not the worst case one). Consequently
we can derive from equation~(\ref{eq:szerekes})  an upper bound on the
maximal correction radius $E$: for $n=\frac{(k-2)k^i+1}{k-1}$ we have

\begin{eqnarray*}
E=n-r(k,n)-1
&\leq&k^i-(k-1)^i-\frac{k^i-1}{k-1}-1\\
&\leq&(i-1)k^{i-1} - \frac{k^{i-1}-1}{k-1}-1,
\end{eqnarray*}
as the function $f(x)=x^i$ is convex. Hence
$$
E\leq (i-1)k^{i-1}-\frac{k^{i-1}}{k-1}-\frac{k-2}{k-1}
$$
As $k^{i-1}=\frac{(k-1)n-1}{k(k-2)}$ we have $i-1\leq \log_k \frac{n}{k-2}$ 
and $\frac{k^{i-1}}{k-1}=\frac{n}{k(k-2)}-\frac{1}{k(k-2)(k-1)}$, therefore
\begin{eqnarray}\label{eq:upperbound}
E&\leq& \frac{n}{k-2} \left(\log_k \frac{n}{k-2}-\frac{1}{k}\right)-\frac{k-2}{k}\\
&\leq& \frac{n}{k-2} \log_k \frac{n}{k-2} \notag
\end{eqnarray}

This shows that, in the worst case, the improvement of the  affine sub-sequence technique
to  the correction radius, compared to the previous list decoding
($\frac{n}{k}-1$) is essentially no bigger than a logarithmic factor. 

The task of bounding $E$ or equivalently $E/n$ from below is much harder. In~\cite{Ro53},
Roth proved $r(3,n)\leq\frac{c}{\log \log n}$, leading to
$\frac{E}{n}\geq 1-\frac{c}{\log\log n}$  but for an arbitrary
$k$ the best known bound is given by Gowers~\cite{Go01}:
$$
\frac{E}{n} \geq 1-\frac{1}{(\log \log n)^{1/2^{2^{k+9}}}}
$$

Figure~\ref{fig:uplo} compares the upper bound on the correction capacity  $E$ of equation~(\ref{eq:upperbound}) with the actual values of Table~\ref{tab:bestn} for $k=5,7$.
\begin{figure}[h!]
\begin{center}
\includegraphics[width=\graphl]{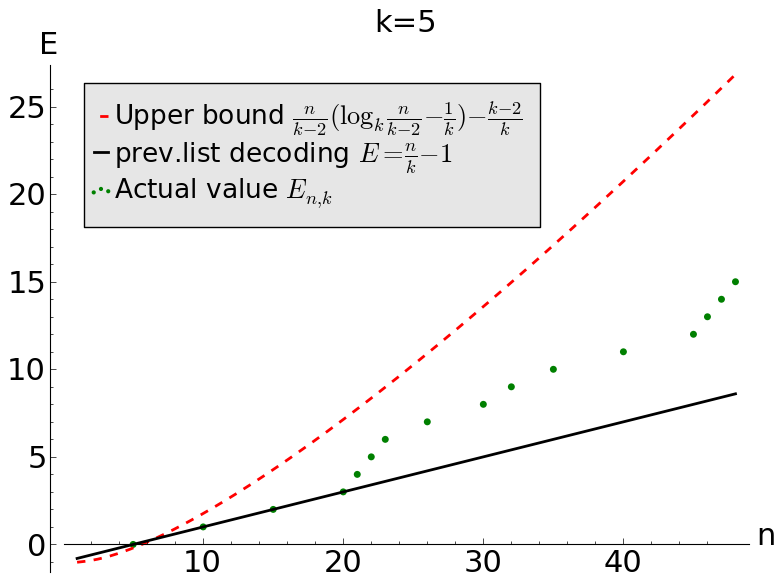}
\includegraphics[width=\graphl]{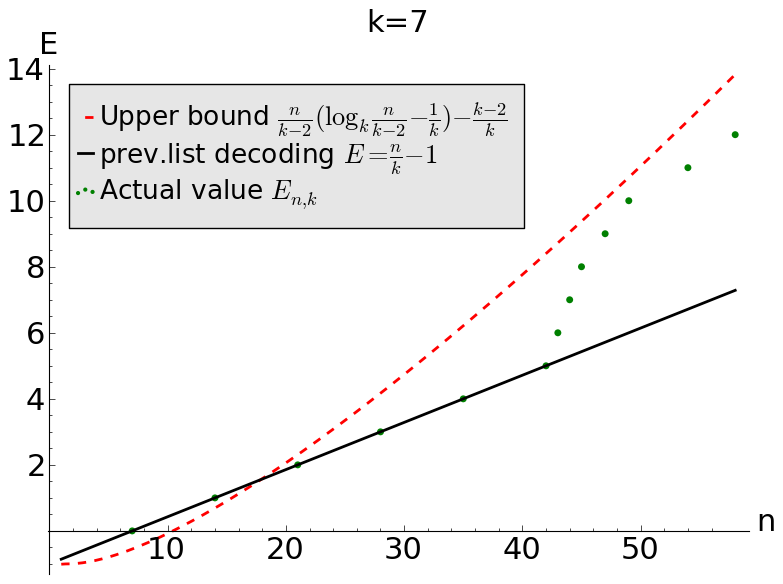}
\end{center}
\caption{Decoding radius for the affine sub-sequence algorithm: comparison of
upper, lower bounds and actual value}
\label{fig:uplo}
\end{figure}

\section{Characteristic zero}
\label{sec:charzero}

In this last section we  consider the case where  the base field has
 characteristic zero. We show  that some choices of evaluation points allow to
 reach much better minimum distances for sparse polynomial 
 evaluation codes.
\newline

\myparagraph{Positive real evaluation points}

\begin{theorem}\label{th:realmindist}
Consider $n$ distinct positive real numbers $\xi_0,\ldots,\xi_{n-1} >0$.
The sparse polynomial evaluation code defined  by
\[
\begin{split}
\mathcal{C}(n,\spbound) = \left\{ (f(\xi_0), \dots,f(\xi_{n-1})) :
    f\in \mathbb{R}[z] \text{ is }t\text{-sparse }\right.\\ 
\left.  \text{with } t \leq \spbound \right\}
\end{split}
\]
has minimum distance $\delta = n-2\spbound+1$.
\end{theorem}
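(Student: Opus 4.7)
My plan is to prove both bounds on $\delta$ using Descartes' rule of signs: a nonzero real polynomial with $k$ nonzero terms has at most $k-1$ positive real roots. The lower bound $\delta \geq n - 2T + 1$ is immediate: given two distinct codewords associated to $T$-sparse polynomials $f, g$, their difference $h = f - g$ is a nonzero, at most $2T$-sparse real polynomial, so it vanishes on at most $2T - 1$ of the positive reals $\xi_i$, leaving at least $n - 2T + 1$ positions of disagreement.

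For tightness I would construct two $T$-sparse polynomials whose evaluations agree at exactly $2T - 1$ points. Fix any $2T - 1$ of the evaluation points, say $\xi_{i_1}, \ldots, \xi_{i_{2T-1}}$, and a support of $2T$ distinct nonnegative exponents $e_1 < \cdots < e_{2T}$ (for instance $e_j = j - 1$). Requiring that $h(z) = \sum_{j=1}^{2T} c_j z^{e_j}$ vanishes at these points is a homogeneous linear system of $2T - 1$ equations in $2T$ unknowns $c_j$, and therefore admits a nonzero solution.

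The heart of the proof, and the step I expect to be the main obstacle, is showing that this kernel vector satisfies both $c_j \neq 0$ for every $j$ and that the signs of the $c_j$ alternate strictly; only then does $h$ genuinely split as the difference of two $T$-sparse polynomials with positive coefficients. The nonvanishing of every $c_j$ I would obtain from the Chebyshev (T-)system property of $\{z^{e_j}\}$ on $(0,\infty)$: each $(2T-1) \times (2T-1)$ generalized Vandermonde minor $\det[\xi_{i_k}^{e_j}]$ is nonzero, so the one-dimensional kernel is spanned (up to sign) by the vector of signed minors obtained by deleting each column in turn, none of which vanishes. The alternation of signs is then forced by the equality case of Descartes' rule: $h$ already has at least $2T - 1$ distinct positive real roots, hence at least $2T - 1$ sign changes in its coefficient sequence, and as a $2T$-term polynomial it has at most $2T - 1$ sign changes, so the $c_j$ must alternate strictly.

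To finish, I would write $h = f - g$, where $f$ collects the $T$ monomials with positive $c_j$ and $g$ collects the $T$ monomials whose coefficients are the absolute values of the negative $c_j$. Both $f$ and $g$ are then $T$-sparse. The difference $(f-g)(\xi_i) = h(\xi_i)$ vanishes at the $2T - 1$ chosen points $\xi_{i_k}$ and, by Descartes once more, at no other positive $\xi_i$. Hence the two codewords $(f(\xi_i))_i$ and $(g(\xi_i))_i$ agree on exactly $2T - 1$ positions and differ on the remaining $n - 2T + 1$, matching the lower bound and establishing $\delta = n - 2T + 1$.
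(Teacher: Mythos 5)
Your proof is correct, and it is actually more complete than the paper's. For the lower bound $\delta \ge n-2T+1$ you argue exactly as the paper does: the difference of two distinct codewords is a nonzero, at most $2T$-sparse real polynomial, which by Descartes' rule of signs has at most $2T-1$ positive roots, hence cannot vanish at $2T$ of the $\xi_i$. The paper's proof stops there; it never exhibits a pair of codewords at Hamming distance exactly $n-2T+1$, so the claimed equality is stated without its matching upper bound, and your kernel-vector construction supplies that missing half. Two simplifications are available. First, the nonvanishing of every $c_j$ follows from Descartes alone: if some $c_j$ were zero, $h$ would have at most $2T-1$ terms and hence at most $2T-2$ positive roots, contradicting the $2T-1$ prescribed roots, so the generalized-Vandermonde/Chebyshev-system detour, while correct, is unnecessary. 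Second, the sign-alternation step is superfluous for this code: membership in $\mathcal{C}(n,T)$ imposes no sign condition on coefficients, so once $h$ is known to have exactly $2T$ nonzero terms, any partition of them into two sets of $T$ terms, writing $h = h_1 + h_2$ and setting $f = h_1$, $g = -h_2$, yields two $T$-sparse codewords agreeing at exactly the $2T-1$ chosen points and nowhere else. The alternation argument would matter only if one insisted that $f$ and $g$ have positive coefficients, which the theorem does not require.
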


\begin{proof}
Consider the code words $(f(\xi_0), \dots, f(\xi_{n-1}))$
and $(g(\xi_0), \dots,g(\xi_{n-1}))$ for a $t_f$-sparse polynomial $f$ and a
$t_g$-sparse polynomial $g$, with $t_f,t_g\le T$, at Hamming distance $\le n-2\spbound$.

Then the polynomial $f-g$ has sparsity $\le 2\spbound$, and vanishes 
in least $2\spbound$ distinct positive reals $\xi_i$.
By Descartes's rule of sign $f-g = 0$. 
\end{proof}
\begin{cor}
Suppose we have, for a $t_f \le \spbound$ sparse real polynomial $f(x)$, values
$f(\xi_i)+\epsilon_i$ for $2\spbound + 2E$ distinct positive real numbers $\xi_i
> 0$, where $e \le E$ of those values can be erroneous: $\epsilon_i\neq 0$.
If a $t_g \le \spbound$ sparse real polynomial $g$ interpolates any $2\spbound+E$ of
the $f(\xi_i)+\epsilon_i$, then $g = f$.
\end{cor}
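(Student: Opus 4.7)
The plan is to reduce this to Theorem~\ref{th:realmindist} (or directly to Descartes' rule of signs) by a straightforward pigeonhole count on how many of the evaluation points at which $g$ agrees with the received data are actually error-free.

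First, I would fix notation: let $I \subseteq \{0,\dots,2T+2E-1\}$ be the set of indices $i$ with $\epsilon_i \ne 0$, so $|I| = e \le E$, and let $J \subseteq \{0,\dots,2T+2E-1\}$ be a set of $|J| = 2T+E$ indices for which $g(\xi_j) = f(\xi_j) + \epsilon_j$ for all $j \in J$. I then look at $J \setminus I$: by inclusion-exclusion, $|J \setminus I| \ge |J| - |I| \ge (2T+E) - E = 2T$. For every $j \in J \setminus I$ we have $\epsilon_j = 0$ and $g(\xi_j) = f(\xi_j) + \epsilon_j = f(\xi_j)$, so $f - g$ vanishes at (at least) $2T$ distinct positive real numbers $\xi_j$.

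Second, I would note that $f - g$ has sparsity at most $2T$, since $f$ is $t_f \le T$-sparse and $g$ is $t_g \le T$-sparse. This is exactly the hypothesis under which the argument from the proof of Theorem~\ref{th:realmindist} applies: by Descartes' rule of signs, a nonzero polynomial with at most $2T$ nonzero terms has at most $2T - 1$ distinct positive real roots. Since $f - g$ has at least $2T$ such roots, it must be identically zero, i.e., $g = f$.

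I do not anticipate a real obstacle here; the statement is essentially a packaging of Theorem~\ref{th:realmindist} in the half-the-minimum-distance form, and the only thing to verify is the counting inequality $|J \setminus I| \ge 2T$. The one minor subtlety to flag is that one should write ``at most $2T$-sparse'' rather than ``exactly $2T$-sparse'', since cancellations between terms of $f$ and $g$ may reduce the sparsity; Descartes' bound is monotone in the number of terms, so this does not affect the conclusion.
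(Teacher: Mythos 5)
Your proof is correct and matches the paper's intended argument: the paper states this corollary without an explicit proof, but it is meant to follow from Theorem~\ref{th:realmindist} by exactly the pigeonhole count you give (at least $2T$ of the $2T+E$ agreement points are error-free) combined with the Descartes'-rule argument that a nonzero at-most-$2T$-sparse real polynomial cannot vanish at $2T$ distinct positive reals. Your remark about ``at most $2T$-sparse'' versus ``exactly $2T$-sparse'' is a sensible clarification and consistent with how the paper phrases it.
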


So $f$ can be uniquely recovered from $2\spbound+2E$ values with $e\le E$ errors.
\begin{remark}
We do not have 
an efficient (in polynomial time) decoder up to half this minimum
distance. However, notice 
that when choosing the evaluation points $\xi_i
= \alpha^i$ for some $\alpha \in \RR_{>0} \backslash \{1\}$, the list decoder presented in
Section~\ref{sec:affineseq} 
can be used. Interestingly, it turns out to be  a
unique decoder as long as an affine sub-sequence free of error exists. 
Indeed, the list of candidates can be sieved by removing the polynomials which
evaluations differ by more than $\delta/2$ positions with the received word.
Finally the minimum distance of Theorem~\ref{th:realmindist} 
ensures that only one code-word lies within less than $\delta/2$ modifications
of the received word, hence the decoding is unique.
\end{remark}

\myparagraph{Sampling primitive elements of co-prime orders in the complex unit
 circle} 
\begin{theorem}\label{th:coprimeorders}
Let $T, D$, and  $n\ge k = 2T \frac{\log(D)}{\log(2T)}$ be given.
Consider $n$ $p_i$-th roots of unity $\xi_i \ne 1$, where 
$2T < p_0 < p_2 < \cdots< p_{n-1}$, $p_i$ prime.
The sparse polynomial evaluation code defined  by
\[
\begin{split}
\mathcal{C}(n,\spbound) = \left\{ (f(\xi_0), \dots,f(\xi_{n-1})) :
    f\in \mathbb{Q}[z] \text{ is }t\text{-sparse }\right.\\ 
\left.  \text{with } t \leq \spbound \right\}
\end{split}
\]
has minimum distance $$\delta = n-k+1 = n-2\spbound\frac{\log(D)}{\log(2\spbound)}+1.$$
\end{theorem}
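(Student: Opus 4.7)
The plan is to prove the lower bound $\delta \ge n-k+1$: given two distinct polynomials $f,g$ in the code, with $h = f-g$ non-zero, $s$-sparse ($s\le 2T$), and of degree $<D$, the goal is to show $h(\xi_i)=0$ for at most $k-1$ indices $i$. First, since $p_i$ is prime and $\xi_i\neq 1$ is a $p_i$-th root of unity, $\xi_i$ is primitive and its minimal polynomial over $\QQ$ is the cyclotomic polynomial $\Phi_{p_i}(z) = 1 + z + \cdots + z^{p_i-1}$. Because $h \in \QQ[z]$, $h(\xi_i) = 0$ is equivalent to $\Phi_{p_i}\mid h$.

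Next, I would reduce $h$ modulo $z^{p_i}-1$ to obtain $\bar h_i$, a polynomial of degree less than $p_i$ whose exponents are those of $h$ taken modulo $p_i$ (with colliding coefficients summed); it has at most $s<p_i$ non-zero terms. Since $\Phi_{p_i}\mid (z^{p_i}-1)$, the divisibility $\Phi_{p_i}\mid h$ gives $\Phi_{p_i}\mid \bar h_i$. Comparing degrees, $\bar h_i$ must be a scalar multiple of $\Phi_{p_i}$, but a non-zero such multiple has $p_i$ non-zero terms whereas $\bar h_i$ has fewer. Hence $\bar h_i = 0$, so the coefficients $c_j$ whose exponents share any given residue modulo $p_i$ sum to zero.

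The decisive step is a shift: without loss of generality $e_1 = 0$, because dividing $h$ by $z^{e_1}$ preserves the zeros at the non-zero points $\xi_i$, the sparsity, and the degree bound. Writing $h(z)=c_1+c_2 z^{e_2}+\cdots+c_s z^{e_s}$ with $0<e_2<\cdots<e_s<D$, the cancellation in the residue class of $0$ reads $c_1+\sum_{j\ge 2,\, p_i\mid e_j}c_j=0$; since $c_1\neq 0$, some $e_j$ with $j\ge 2$ must be divisible by $p_i$. Each positive integer below $D$ has at most $\log(D)/\log(2T)$ distinct prime divisors exceeding $2T$, so the number of $p_i$ for which $h(\xi_i)=0$ is at most $(s-1)\log(D)/\log(2T) \le (2T-1)\log(D)/\log(2T) < k$. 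Thus $d(f,g) \ge n-k+1$. The matching upper bound would follow by an explicit construction in which the $e_j$ are chosen as products of disjoint maximal blocks of primes drawn from $\{p_0,\ldots,p_{n-1}\}$, with coefficients selected to force each relevant residue class to cancel.

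The main obstacle is obtaining the linear-in-$T$ count rather than a quadratic one. A naive argument considers every pair $(e_j,e_k)$, notes that $h(\xi_i)=0$ demands $p_i\mid(e_k-e_j)$ for some pair, and sums divisor contributions over $\binom{2T}{2}$ pairs, yielding only the weaker bound $\binom{2T}{2}\log(D)/\log(2T)$. The factor-$T$ improvement comes from normalizing $e_1=0$, which collapses every relevant collision into the residue class of $0$ and reduces pairwise divisibility to divisibility of a single exponent. Justifying this shift uses only that each $\xi_i$ is a root of unity and hence non-zero, so that multiplication by $z^{e_1}$ is invertible at the evaluation points.
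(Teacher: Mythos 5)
Your argument is correct and follows essentially the same route as the paper's proof: reduce $f-g$ modulo $z^{p_i}-1$, use the irreducibility of $1+z+\cdots+z^{p_i-1}$ over $\QQ$ together with the sparsity bound $2T<p_i$ to force the reduction to vanish, and then count the primes that can divide an exponent difference. The only cosmetic difference is that you anchor the collision at the lowest exponent (normalized to $0$) and bound the number of prime divisors exceeding $2T$ of each $e_j$ separately, whereas the paper anchors at the top exponent $e_t$ and compares the product $M=(e_t-e_1)\cdots(e_t-e_{t-1})\le D^{2T}$ against $\prod_{j\in J}p_j>(2T)^k$ --- the same count.
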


\begin{proof}
Consider the code-words
$(f(\xi_0), \dots, f(\xi_{n-1}))$
and $(g(\xi_0), \dots,g(\xi_{n-1}))$ for a $t_f$-sparse polynomial $f$ and a
$t_g$-sparse polynomial $g$, with $t_f,t_g\le T$,  at Hamming distance $\le n-k$.
Then $(f-g)(\zeta_j)$ vanishes for at least $k$ of the $\zeta_i$, say
for those sub-scripted $j \in J$.

Let $0 \le e_1 < e_2 < \cdots < e_t$ be the term exponents in $f-g$,
with $t \le 2T$.  Suppose $f-g \ne 0$.  Consider
$
  M = (e_t - e_1) (e_t - e_2)\cdots (e_t-e_{t-1}).
  $
  Since $M \le D^{2T}$ and $\prod_{j\in J} p_j > (2T)^k \ge D^{2T}$, not all $p_j$
  for $j \in J$ can divide $M$.  Let $\ell \in J$ with $M \not\equiv 0 \pmod{p_\ell}$.
  Then the term $x^{e_t \bmod{p_\ell}}$ is isolated in
  $h(x) = (f(x)-g(x) \bmod{(x^{p_\ell} - 1)})$, and therefore the polynomial $h(x)$
  is not zero;  $h$ has at most $2T$ terms, and $h(\zeta_\ell) = 0$.
  This means that $h(x)$ and $\Psi_\ell(x) = 1 + x + \cdots + x^{p_\ell-1}$ have
  a common GCD.  Because $\Psi_\ell$ is irreducible over $\QQ$, and since
  $\deg(h) \le p_\ell-1$, that GCD is $\Psi_\ell$.  So $h$ is a scalar multiple
  of $\Psi_\ell$ and has $p_\ell > 2T$ non-zero terms, a contradiction. 
\end{proof}

\begin{cor}
Let $T, D, E$ be given and let the integer $k \ge 2T \log(D)/\log(2T)$.
Suppose we have, for a $t_f$-sparse polynomial $f\in\QQ[x]$, where
$t_f \le T$ and $\deg(f) \le D$, the values $f(\zeta_i)$ for $k + 2E$
$p_i$-th roots of unity $\zeta_i \ne 1$, where $2T < p_1 < p_2 < \cdots
< p_{N + 2E}$, $p_i$ prime.  Again $e \le E$ of those values can be
erroneous $f(\zeta_i) + \epsilon_i$.  If a $t_g$-sparse polynomial $g\in\QQ[x]$
with $t_g \le T$ and $\deg(g) \le D$ interpolates any $k+E$ of the
$f(\zeta_i)+\epsilon_i$, then $g = f$.
\end{cor}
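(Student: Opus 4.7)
The plan is to reduce this corollary to a standard minimum-distance/unique-decoding argument built on Theorem~\ref{th:coprimeorders}. With the $n = k + 2E$ evaluation points $\zeta_0, \dots, \zeta_{n-1}$ that are $p_i$-th roots of unity for primes $2T < p_0 < p_1 < \cdots < p_{n-1}$, Theorem~\ref{th:coprimeorders} produces a sparse polynomial evaluation code $\mathcal{C}(n, T)$ of minimum distance $\delta = n - k + 1 = 2E + 1$. The hypothesis $n \ge k$ required by that theorem is satisfied trivially since $E \ge 0$, and both $f$ and $g$ belong to the code because $t_f, t_g \le T$ and $\deg(f), \deg(g) \le D$.

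Next I would convert the interpolation hypothesis into a Hamming-distance bound between the two codewords $\mathbf{f} = (f(\zeta_0), \dots, f(\zeta_{n-1}))$ and $\mathbf{g} = (g(\zeta_0), \dots, g(\zeta_{n-1}))$. Among the $k+E$ positions $i$ at which $g(\zeta_i) = f(\zeta_i) + \epsilon_i$, at most $E$ can have $\epsilon_i \ne 0$ since by assumption only $e \le E$ of the received values are erroneous. Therefore $g(\zeta_i) = f(\zeta_i)$ holds at at least $k$ positions, so $\mathbf{f}$ and $\mathbf{g}$ differ in at most $n - k = 2E$ positions.

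Finally, since the minimum distance $\delta = 2E+1$ strictly exceeds $2E$, two distinct codewords of $\mathcal{C}(n,T)$ cannot be this close, so $\mathbf{f} = \mathbf{g}$. The proof of Theorem~\ref{th:coprimeorders} actually establishes more than a minimum-distance bound: it shows that if $f - g \ne 0$ then $(f-g)(\zeta_i)$ vanishes at fewer than $k$ indices, so codeword equality forces $f = g$ as polynomials in $\QQ[x]$.

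This is the classical Singleton-style unique-decoding argument, so I do not anticipate a real obstacle; the substantive content lies entirely in Theorem~\ref{th:coprimeorders}, which supplies the minimum distance. The only point worth double-checking is the accounting in the second paragraph (that the $E$-bound on errors applies within the $k+E$ agreeing positions, not merely globally), which is immediate from the pigeonhole observation that erroneous coordinates can kill agreement only at $E$ of the $k+E$ chosen positions.
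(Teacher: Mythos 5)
Your proof is correct and is exactly the argument the paper intends: the paper states this corollary without proof as an immediate consequence of Theorem~\ref{th:coprimeorders}, relying on the standard fact that a code of minimum distance $n-k+1 = 2E+1$ admits unique decoding when the two codewords agree on at least $k$ of the $n = k+2E$ positions. Your pigeonhole accounting (at most $E$ of the $k+E$ interpolated positions can be erroneous, so $f$ and $g$ agree on at least $k$ evaluation points) is the right and only nontrivial step, and it is handled correctly.
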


\section{Conclusion}

Our 
codes, arising from a natural construction, are surprisingly rich and
difficult to analyze.  On one hand,
it is natural to choose evaluation points as consecutive powers of
a primitive root of unity, in order to benefit from the efficient interpolation
algorithm of Blahut/Ben-Or/Tiwari, 
but it is precisely this setting that implies existence of bad worst case
error vectors and hence reduces their minimum distance. Much better minimum
distances should be attained in the general case, as suggested by
Theorem~\ref{th:coprimeorders}, but then no efficient decoding algorithm is
available.
Those are apparently difficult problems left to be solved. 


\section{Acknowledgments}
We are thankful to  Daniel Augot, Bruno Salvy and the  referees for their helpful
remarks and suggestions.




\def\refname{\Large\bfseries References}

\bibliographystyle{acm} 

\bibliography{strings,ldsic,crossrefs}

\end{document}